\DeclareFontFamily{OT1}{rsfs}{}
\DeclareFontShape{OT1}{rsfs}{m}{n}{ <-7> rsfs5 <7-10> rsfs7 <10-> rsfs10}{}
\DeclareMathAlphabet\mathcurl{OT1}{rsfs}{m}{n}
\titlerule\vspace{1ex}\Large\chaptername\ \thechapter}{1ex}
\titlespacing{\chapter}{0pt}{-2em}{2em}
\titlespacing{\section}{0pt}{3em}{1em}
\titleformat{\subsection}[runin]{\bfseries\filright}{\thesubsection.}{.5em}{}[.\ --- \ ]
\titlespacing{\subsection}{0pt}{0.5em}{0pt}
\definecolor{grey}{rgb}{0.6,0.6,0.6}
\def\1#1{{\bf #1}}
\def\2#1{{\cal #1}}
\def\4#1{{\tt #1}}
\def\5#1{{\sf #1}}
\def\6#1{{\mathfrak #1}}
\def\7#1{{\Bbb #1}}
\def\8#1{{\rm #1}}
\def\9#1{{\mathcurl #1}}
\newtheorem{thm}{Theorem}[section]
\newtheorem{defi}[thm]{Definition}
\newtheorem{lem}[thm]{Lemma}
\newtheorem{prop}[thm]{Proposition}
\newtheorem{cor}[thm]{Corollary}
\newtheorem{remark}[thm]{Remark}
\newtheorem{example}[thm]{Example}
\def\fin{\hfill $\lozenge$}
\def\I{\openone}
\def\weyl#1{{\bf w}(#1)}
\def\Alg{\mathcal{A}}
\def\alg#1{\Alg(#1)}
\def\glph#1{\Xi_{#1}}
\def\Qcaph{\vartheta}
\def\qcaph#1{\Qcaph(#1)}
\def\ffunct#1{C_{\field}(\7Z^{#1})}
\def\fffunct#1{C_{\field}(\7T_N^{#1})}
\def\ugroup#1{\mathcal U (#1)}
\def\fring#1{{\mathcal P}_{#1}}
\def\subfring#1{{\mathcal R}_{#1}}
\def\vari{{u}}
\def\symvari#1{{w_{#1}}}
\def\Vari#1#2{\vari_{#1},\cdots,\vari_{#2},\vari^{-1}_{#1},\cdots,\vari^{-1}_{#2}}
\def\poly#1#2{#1[#2]}
\def\ccr#1#2{\6A}
\def\field{{\Bbb F}}
\def\Field#1{{\Bbb Z}_{#1}}
\def\SL#1#2{{\rm SL}(#1,#2)}
\def\convol{\star}
\def\radj#1{\overline{{#1}}}
\def\gen#1{{\1g({#1})}}
\def\genp#1{\1f_{#1}}
\def\genf{\1f}
\def\shear#1{{\left(\begin{array}{cc}1&0\\{#1}&1\end{array}\right)}}
\def\ftra#1{{\left(\begin{array}{cc}0&-#1\\1/#1&0\end{array}\right)}}
\def\deg#1{{\rm deg}({#1})}
\def\coef#1#2{\langle{#2}\rangle_{#1}}
\def\supp#1{{\rm supp}(#1)}
\def\pal#1#2{#1(\vari^{#2}+\vari^{-{#2}})}
\def\uroot#1{\varepsilon}
\def\blf#1#2{\beta(#1,#2)}
\def\Blf{\beta}
\def\blfb#1#2{\tilde{\beta}(#1,#2)}
\def\Blfb{\tilde{\beta}}
\def\spf#1#2{\sigma(#1,#2)}
\def\Spf{\sigma}
\def\mspf#1#2{\tilde\sigma(#1,#2)}
\def\Mspf{\tilde\sigma}
\def\fmspf#1#2{\hat{\sigma}(#1,#2)}
\def\FMspf{\hat{\sigma}}
\def\tra#1{\tau_{#1}}
\def\Tra#1{\alpha_{#1}}
\def\oned#1#2{{\bf r}_{#1}{#2}}
\def\ket#1{|#1\rangle}
\def\Ir{\7Z}
\def\Cx{\7C}
\def\Nei{\2N}
\begin{document}
%%%
%%%
\pagestyle{main}
%%%

%%%layout-titlepage%%%
\thispagestyle{empty}

\vspace*{\stretch{1}}

\begin{center}
\hrule

\vspace*{1cm}

{\large\bfseries On the structure of Clifford quantum cellular automata}

\vspace*{10mm}

Dirk-M.~Schlingemann$^{\ref{aff-1},\ref{aff-2}}$, Holger Vogts$^{\ref{aff-2}}$ and Reinhard~F.~Werner$^{\ref{aff-2}}$

\vspace*{1cm}

\hrule
\end{center}

\vspace*{\stretch{2}}
\begin{center}
{\footnotesize
\begin{enumerate}
\item
ISI Foundation, Quantum Information Theory Unit\\
Viale S. Severo 65\\
10133 Torino, Italy
\label{aff-1}
\item
Institut f\"ur Mathematische Physik\\
Technische Universit\"at Braunschweig\\
Mendelssohnstra{\ss}e~3\\
38106 Braunschweig, Germany
\label{aff-2}
\end{enumerate}
}
\vspace*{1cm}
{\sc \today}
\end{center}
\vspace*{1cm}
\setcounter{page}{0}
\newpage

%%%
%%%
\begin{abstract}
We study reversible quantum cellular automata with the restriction
that these are also Clifford operations. This means that tensor
products of Pauli operators (or discrete Weyl operators) are mapped
to tensor products of Pauli operators. Therefore Clifford quantum
cellular automata are induced by symplectic cellular automata in
phase space. We characterize these symplectic cellular automata and
find that all possible local rules must be, up to some global shift,
reflection invariant with respect to the origin. In the one
dimensional case we also find that every uniquely determined and
translationally invariant stabilizer state can be prepared from a
product state by a single Clifford cellular automaton timestep,
thereby characterizing these class of stabilizer states, and we show
that all 1D Clifford quantum cellular automata are generated by a
few elementary operations. We also show that the correspondence
between translationally invariant stabilizer states and
translationally invariant Clifford operations holds for periodic
boundary conditions.
\end{abstract}
%%%

%%%
\tableofcontents
%%%

%%%
%%%
\section{Introduction}
%%%
%%%

Classical cellular automata have become a standard modeling tool for
complex phenomena. With their discrete time step and their
intrinsically high degree of parallelization they are ideally suited
for models of diverse phenomena as coffee percolation, highway
traffic and oil extraction from porous media. As an abstract
computational model cellular automata can simulate Turing machines,
and even explicit simple automata such as Conway's life game have
been shown to support universal computation \cite{Conw76}. On the
quantum side, the interest in cellular automata stems from their
implementation in optical lattices and arrays of optical microtraps.
However, the theory of quantum cellular automata (QCAs) is still in
its early stage. Since each cell may influence several others, the
dynamics is subject to a ``no-cloning'' constraint, leading to a
non-trivial interplay between the conditions of locality and
unitarity.

It is therefore helpful to have some class of QCAs, which can be
analyzed in great detail, and which can serve as a testing ground
for general ideas about QCAs. This paper is concerned with such an
analysis, namely of the special class of Clifford quantum cellular
automata (CQCAs), in which the elementary time step is given by a
``Clifford gate'', meaning that it takes tensor products of Pauli
matrices to tensor products of other Pauli matrices. In the theory
of gate model computation, and for the one-way quantum computation
model, a detailed analysis of what can be done with Clifford
operations alone turned out to be very useful, even though -- as the
downside of allowing an efficient classical description -- such
gates alone do not allow universal quantum computation. By analogy
it is therefore clear that CQCAs do not comprise the full complexity
of QCAs. What one can hope to get, however, is an interesting class
of cellular automata, and some tools for understanding this class in
great detail.

A similar analysis has been done with Gaussian quantum cellular
automata~\cite{KruWer07}, e.g. the QCA describes a chain of harmonic
oscillators with nearest neighbor couplings. For all these QCAs the
Hilbert space of one elementary cell is infinite dimensional, and
the QCA maps phase space translations, also referred to as Weyl
operators, to phase space translations. In our approach we use
elementary cells with a finite number of levels, which corresponds
to replacing the continuous phase space by a discrete space.

%%%
\subsection{Definition of Clifford quantum cellular automata}\label{def-cqca}
%%%
By definition, a cellular automaton is a lattice system, which
consists of many subsystems (called ``cells'') labeled by a point
lattice in space. For simplicity, we will always take the lattice as
$\7Z^s$, the integer cubic lattice in $s$ space
dimensions\footnote{see however Section~\ref{sec-periodic}, where we
discuss periodic boundary conditions, and hence toroidal lattices}.
The cell systems in the classical case may have states like
``occupied'' and ``empty''. In the quantum case, they will be
$p$-state quantum systems, for some finite $p$. In either case the
group of lattice translations (``shifts'') is a symmetry of the
system.

The dynamics will be given by a discrete global time step, or global ``transition rule'' assumed to have the following three properties:
\begin{itemize}
\item
{\it translation invariance}: the time step commutes with the
lattice translation symmetries.

\item
{\it reversibility}: there is an inverse rule. For a finite quantum
system this would mean unitary dynamics. For the infinite lattice
system this will be stated algebraically below.

\item
{\it locality}, or ``finite propagation speed'': the state of each
cell after one step can be computed from the state in a fixed finite
region around the cell.
\end{itemize}

\noindent These assumption define the class of reversible QCAs
\cite{SchuWer04}. The locality and reversibility conditions are best
phrased in the Heisenberg picture: if $A$ denotes some observable of
the system, its expectation after one time step starting from the
initial state $\rho$ will be $\langle T(A)\rangle_\rho$ for a
suitable observable $T(A)$, where by $\langle A\rangle_\rho$ we
denote the expectation of $A$ in the state $\rho$. The
transformation $A\mapsto T(A)$ is what we will call the {\it global
rule} of the automaton. Then reversibility (together with complete
positivity, which is required of any dynamical map) implies that $T$
is a homomorphism of the observable algebra of the whole system: $T$
is linear, $T(AB)=T(A)T(B)$, and $T(A^\ast)=T(A)^\ast$. Locality
means that an observable $A_x$ localized at some lattice point
$x\in\Ir^s$ (i.e., an observable of the cell at $x$) will be mapped
to an observable localized in the region $x+\Nei$. That is, $T(A_x)$
will be in the tensor product of the cell algebras belonging to the
sites $x+n$ with $n\in\Nei$. By translation invariance this set
$\Nei$, called the {\it neighborhood scheme} of the automaton is
independent of $x$.

The global transition rule $T$ is a map on an infinite dimensional
space, and hence not readily specified explicitly. However, by using
the basic properties of QCAs one can see that it suffices to know
just a few local data, associated with the region $\Nei$, in order
to reconstruct $T$ uniquely. Suppose we know $T(A_x)$ for every
observable $A_x$ in some basic cell $x$. Then by translation
invariance we know the analogous transformation for {\it any} cell.
Moreover, since every local observable can expanded in products of
one-cell observables, and because $T$ is a homomorphism, we can
compute $T$ for any local observable. So the restriction $T_x$ of
$T$ to the observables of a single cell $x$ can be called the {\it
local rule} of the automaton. We can also decide by a finite set of
equations, whether a proposed local rule actually belongs to a
well-defined global rule: clearly $T_x$ must be a homomorphism. The
only further condition one has to check is that the images of
observables $A_x$ and $B_y$ localized in the cells indicated also
commute, i.e., $T_x(A_x)T_y(B_y)=T_y(B_y)T_x(A_x)$, whenever $x\neq
y$. This is necessary, because $A_x$ and $B_y$ commute, and a
moment's reflection shows that this is also sufficient for uniquely
reconstructing the images of arbitrary observables under $T$. The
commutation conditions on the local rule are trivially satisfied
when $x$ and $y$ are sufficiently far apart, when $x+\Nei$ and
$y+\Nei$ are disjoint. Hence only finitely many conditions need to
be checked.

For special classes, the job of specifying a QCA via its local rule
can be reduced still further, which is where the Clifford condition
comes in. Let us assume now that we have a qubit system, so the
local cell dimension is $p=2$. For each local cell we thus have a
basis for the observables, consisting of the identity and the three
Pauli matrices, which we denote by $X,Y$ and $Z$. By $X_x$ etc.\ we
denote the corresponding Pauli matrix belonging to the cell $x$.
Finite tensor products of Pauli matrices belonging to different
sites, perhaps with a sign $\pm1,\pm i$ will be referred to as {\it
Pauli products}. These form a group, called the Pauli group. Then a
{\it Clifford quantum cellular automaton} (CQCA for short) is
defined by the condition
\begin{itemize}
\item {\it Clifford condition}: If $A$ is any multiple of a Pauli product, so is $T(A)$.
\end{itemize}
Clearly, this is equivalent to the property that one-cell Pauli
operators are taken to Pauli products, which simplifies the local
rule. Moreover, it suffices to specify $T(X_x)$ and $T(Z_x)$ for
some $x$, because we can compute $T(Y_x)=T(iX_xZ_x)=iT(X_x)T(Z_x)$
via the homomorphism property. Hence a CQCA is defined in terms of
just two Pauli products.

%%%
\begin{example}
\label{example-0} \rm For the one-dimensional lattice ($s=1$),
consider the relations
\begin{equation}\label{example-rule}
 \begin{array}{lclcr}
  T(X_x)&=& &-Z_x&\\
  T(Z_x)&=&Z_{x-1}\otimes &X_x&\otimes Z_{x+1}
 \end{array}
\end{equation}
Let us verify that all requirements for a local rule are satisfied.
To begin with each of the expressions on the right hand side, as a
product of Pauli matrices, is hermitian with square one. These are
all the required conditions related to just a single line, and are
satisfied for any Pauli product with a sign $\pm1$. Next we have to
verify the anti-commutation relation arising from applying a
homomorphism $T$ to the anti-commutation relation $XZ+ZX=0$. Indeed,
$T(X_x)T(Z_x)+T(Z_x)T(X_x)=-Z_{x-1}\otimes(Z_xX_x+X_xZ_x)\otimes
Z_{x+1}=0$. Hence the definition $T(Y_x)=iT(X_x)T(Z_x)$ again
produces a hermitian operator with square $\I$, and the local rule
is a homomorphism $T_x$ into the algebra on the sites $x+\Nei$ with
$\Nei=\{-1,0,1\}$. Finally, we have to check the commutation rules
for the images of observables from neighboring sites. For example,
we have $[T(X_x),T(Z_{x+1})]=-[Z_x,Z_x\otimes X_{x+1}\otimes
Z_{x+2}]=0$, and similarly $[T(Z_x),T(Z_{x+2})]=0$. Perhaps the only
non-trivial relation to check is
\begin{displaymath}
  [T(Z_x),T(Z_{x+1})]
  =[Z_{x-1}\otimes X_x\otimes Z_{x+1},Z_{x}\otimes X_{x+1}\otimes Z_{x+2}]
  =0,
\end{displaymath}
which holds because the factors on sites $x$ and $x+1$ both {\it anti-}commute.

In principle, we would also have to check the existence of an
inverse for the automaton, which is actually given by
$T(X_x)=X_{x-1}\otimes Z_x\otimes X_{x+1}$ and $T(Z_x)=-X_x$, but as
was shown in \cite{SchuWer04}, this already follows from the
homomorphism property. \fin
\end{example}

It is clear from this example that the search for CQCAs is now a
combinatorial problem. We can first look for {\it self commuting}
Pauli products, i.e., Pauli products, which commute with all
translates of itself. Only these can appear on the right hand side
of local rules. One can then check, for any pair $X',Z'$ of such
products, whether they anti-commute, while all proper translates of
$X'$ commute with $Z'$. In fact, we began our investigation by
running this simple search program. We found, for example, that
while there is a rich variety of self-commuting Pauli products only
reflection symmetric products could appear in a local rule. This
will indeed be shown in full generality below.

%%%
\subsection{Translationally invariant stabilizer states}
%%%
Commuting sets of Pauli products also play a central role in the
problem of determining so-called stabilizer states: these are pure
states, which can be characterized by eigenvalue equations for Pauli
products or, equivalently, by the condition that certain Pauli
products have expectation $\pm1$. It is easy to check that Pauli
products which simultaneously have sharp expectations $\pm1$ must
commute. Now for the infinite lattice systems it is natural to ask
which Pauli products $A$ have the property that there is a unique
pure state $\rho$ of the infinite system, which has expectation 1
for $A$ and all its translates.

As the simplest example, let us take $A=Z_x$, so we ask for states
with $\langle Z_x\rangle=1$ for all $x\in\Ir^s$. Clearly, this
defines the ``all spins up'' state, which is an infinite product
state. A slightly more complex example uses the stabilizer operators
$A=Z_{x-1}\otimes X_x\otimes Z_{x+1}$, which singles out the
one-dimensional {\it cluster state}, whose higher dimensional
analogs are used as the entanglement resource for universal one-way
quantum computing~\cite{BrieRau01c}.

Showing that these eigenvalue equations define a unique state of the
infinite lattice is now very easy, by using the cellular automaton
(\ref{example-rule}): Since this automaton maps $Z_x$ to the
required stabilizer operator, all existence and uniqueness problems
for such a state are mapped to the corresponding trivial questions
for the stabilizer operator $Z_x$. In other words, self-commuting
Pauli products of the form $A=T(Z_x)$ for some CQCA $T$ characterize
a unique translation invariant cluster state. We will show later
that (at least in one dimension) the converse is also true, so that
there is a very close connection between stabilizer states and
Clifford cellular automata.

%%%
\subsection{Our methods and techniques}
%%%
The definition of CQCAs given above applies only to qubit systems.
However, all our results are also valid for higher dimensional
cells, particular cells of prime dimension $p$. The role of the
Pauli operators $X$ and $Z$ is then taken by the cyclic shift on
$\Cx^p$, and the multiplication by a phase, i.e.
\begin{equation}\label{onestep}
 \begin{array}{lcr}
  X\ket{q}&=& \ket{q+1}\\
  Z\ket{q}&=& \8e^{2\pi i q/p}\ket{q},
 \end{array}
\end{equation}
where all ket labels $q$ are taken modulo $p$. Products of these
operators are called {\it Weyl operators}, and the appropriate
definition of CQCAs requires that $T(X_x)$ and $T(Z_x)$ are both
tensor products of Weyl operators. The necessary preliminaries on
the Pauli group and Clifford operations in this extended setting,
and the background concerning infinite lattice systems are provided
in Subsection~\ref{sec-2-1}.

In order to utilize the translation symmetry one would like to use
Fourier transform techniques. However, in the discrete structures an
integral with complex phases makes no sense. It turns out, however,
that a ``generating function'' technique does nearly as well. The
analogue of the Fourier transform is then a Laurent-polynomial in an
indeterminate variable, i.e., a polynomial with coefficients in the
field $\Field{p}=\7Z_p$ with both positive and negative powers. The
salient facts about this structure will be provided in
Subsection~\ref{sec-2-3}.

The description in terms of Laurent polynomials can also be adapted
to lattices with periodic boundary conditions. This will be
described in Section~\ref{sec-periodic}.

%%%
\subsection{Outline and summary of results}
%%%
In order to discuss general Clifford quantum cellular automata, that
is, for arbitrary lattice and single cell dimension, we introduce in
Section~\ref{sec-2} the necessary mathematical tools. We first
review the concept of discrete Weyl systems and (infinite) tensor
products of them, thereby characterizing the underlying ``phase
space''. We show that Clifford QCAs can be completely characterized
in terms of classical symplectic cellular automata. We also
introduce our Fourier transform techniques and study the structure
of isotropic subspaces, because these play an essential role for the
characterization of symplectic cellular automata and translationally
invariant stabilizer states.

In Section~\ref{sec-3} we will state our main results. We show that
symplectic cellular automata can be identified with two-by-two
matrices, which have Laurent-polynomials as matrix elements. We will
find that the determinant of this matrix must be one and that the
polynomials must be reflection invariant. In the one-dimensional
case we state that every translationally invariant stabilizer state
can be prepared out of a product state by a single CQCA step.
Furthermore, we also specify the generators of all 1D QCAs.

Finally, we show in Section~\ref{sec-periodic} that the close
connection between translationally invariant stabilizer states and
CQCAs also holds in the case of periodic boundary conditions even in
every lattice dimension.

%%%
%%%
\section{Mathematical tools}
%%%
%%%
\label{sec-2}
We introduce some mathematical tools, which we will
use to study Clifford QCAs. We start with a short repetition of
finite Weyl systems, which generalize the Pauli operators to systems
with prime number dimensions. These Weyl operators can be described
by phase space vectors and Clifford operations are induced by
symplectic transformations on the phase space. Since we are looking
for translationally invariant operations, we also introduce some
kind of Fourier transform.
%%%
\subsection{Weyl algebras}
%%%
\label{sec-2-1}

Each single cell in a QCA is given by a finite dimensional quantum
system, so the observables on a single system can be described by
matrices from the algebra $\2M_p(\7C)$. A possible basis for this
algebra is given by Weyl operators $\weyl{r,k}=X^rZ^k$, whereby $X$
and $Z$ are given by the generalized Pauli operators from
equation~(\ref{onestep}). These operators fulfill the Weyl relations
\begin{equation}
\weyl{r_1+r_2,k_1+k_2}=\uroot{p}^{-r_2k_1}\weyl{r_1,k_1}\weyl{r_2,k_2}\,,
\end{equation}
where $\uroot{p}=\exp(2\pi\8i/p)$ is the $p^{\rm th}$ root of unity.
>From this equation the commutation relation
\begin{equation}
\weyl{r_1,k_1}\weyl{r_2,k_2}=\uroot{p}^{r_1k_2-r_2k_1}\weyl{r_2,k_2}\weyl{r_1,k_1}
\end{equation}
immediately follows. Obviously we get for $p=2$ the standard Pauli
operators from
\begin{equation}
X=\weyl{1,0}\,,\quad Y=i\weyl{1,1}\,,\quad Z=\weyl{0,1}\;,
\end{equation}
and the Weyl operators are generalizations of the Pauli operators to
higher dimensional spaces. The indices $r$ and $k$ are integers
modulo $p$, so they are elements of the finite field
$\field=\Field{p}$. In infinite dimensional systems Weyl operators
describe phase space translations and therefore we call the space
$\field^2$ a discrete phase space.

Building a tensor product of Weyl operators means that we must
assign a phase space vector $\xi(x)=(\xi_+(x),\xi_-(x))\in\field^2$
to each lattice point $x\in\7Z^s$, so $\xi$ is a mapping from
$\7Z^s$ into $\field^2$ and we denote for the tensor product
\begin{equation}
\weyl{\xi}=\bigotimes_{x\in\7Z^s}\weyl{\xi(x)}\;.
\end{equation}
This infinite tensor product is well defined, if there are only
finitely many of the Weyl operators different from $\I=\weyl{0}$.
For a mapping $\xi:\7Z^s\to\field^2$ we have that only finitely many
$x$ with $\xi(x)\ne 0$ are allowed, so the support of $\xi$ is
finite. The set of such functions describes the global system and is
identified with the global phasespace $\glph{s}$. We denote the
finitely supported functions from $\7Z^s$ to $\field$ by
$\ffunct{s}$ and we have $\glph{s}=\ffunct{s}^2$. The corresponding
Weyl operators generate an algebra and, by restricting the support
of the functions to some finite subset $\Lambda\in\7Z^s$, we get a
finite dimensional algebra
$\alg\Lambda=\bigotimes_{x\in\Lambda}\2M_p(\7C)$, also called the
local algebra of $\Lambda$. By taking the union of these algebras
over all finite subsets of $\7Z^s$ and taking the closure (in
operator norm) we get a quasilocal $C^\ast$-algebra
$\Alg$~\cite{BraRob}, which is used in the general theory of quantum
cellular automata~\cite{SchuWer04}.

The local structure is accompanied by the symmetry group of lattice
translations. For each lattice translation $x\in\7Z^s$ an
automorphism $\Tra{x}$ is defined by
\begin{equation}
\label{translation} \Tra{x}\weyl{\xi}=\weyl{\tra{x}\xi} \; .
\end{equation}
where $\tra{x}$ is the translation of phase space vectors. Given a
phase space vector $\xi$, the translated vector is
$(\tra{x}\xi)(y)=\xi(y-x)$. So the automorphism $\Tra{x}$ shifts the
position of each tensor factor by $x$. It follows directly from
(\ref{translation}) that the homomorphism property
$\Tra{x+y}=\Tra{x}\Tra{y}$ holds. Furthermore, the automorphism
$\Tra{x}$ maps the local algebra $\alg\Lambda$ onto
$\alg{\Lambda+x}$.

The Weyl relations of a single system completely determine the
relations of the global system which are given by
\begin{equation} \label{ccr}
\weyl{\xi+\eta}=\uroot{p}^{\blf{\xi}{\eta}}\weyl{\xi}\weyl{\eta}\;,
\end{equation}
where we have introduced the bilinear form
$\blf{\xi}{\eta}:=\sum_{x\in\7Z^s}\xi_+(x)\eta_-(x)$. The adjoint of
a Weyl operator is given by
\begin{equation}
\label{star} \weyl{\xi}^\ast=\uroot{p}^{-\blf{\xi}{\xi}}\weyl{-\xi}
\end{equation}
which is due to the unitarity of the Weyl operators.

Since commutation relations are essential for validating possible
local rules of quantum cellular automata, the commutation relations
of Weyl operators are most important for us. We get
\begin{equation}
\label{com-ccr}
\weyl{\eta}\weyl{\xi}=\uroot{p}^{\spf{\xi}{\eta}}\weyl{\xi}\weyl{\eta}
\; ,
\end{equation}
whereby $\spf{\xi}{\eta}:=\blf{\xi}{\eta}-\blf{\eta}{\xi}$ is the
canonical symplectic form on $\glph{s}$. This means that two Weyl
operators $\weyl{\xi}$ and $\weyl{\eta}$ are commuting if and only
if $\spf{\xi}{\eta}=0$ (and for $p=2$ they anti-commute if
$\spf{\xi}{\eta}=1$). In particular, an abelian algebra of Weyl
operators is given by a subspace of $\glph{s}$ on which the
symplectic form vanishes. Such a subspace is called isotropic and a
maximally abelian algebra corresponds to a maximally isotropic
subspace.

%%%
\subsection{Clifford quantum cellular automata}
%%%
\label{sec-2-2} As already mentioned a Clifford quantum cellular
automaton is a QCA which maps Weyl operators to multiples of Weyl
operators, which are in our case tensor products of single cell Weyl
operators, so we have the relation (the ``Clifford condition'')
\begin{equation}\label{cqca}
T(\weyl{\xi})=\qcaph{\xi}\weyl{\1t\xi}\;
\end{equation}
with a mapping $\1t$ on the phase space $\glph{s}$ and some phase
valued function $\Qcaph:\glph{s}\to\ugroup1=\{z\in\7C||z|=1\}$.
Since $T$ is an automorphism we find with equation~(\ref{com-ccr})
that
$\weyl{\1t\xi}\weyl{\1t\eta}=\uroot{p}^{\spf{\eta}{\xi}}\weyl{\1t\eta}\weyl{\1t\xi}$
holds, so we have $\spf{\1t\xi}{\1t\eta}=\spf{\xi}{\eta}$ or in
other words $\1t$ is a symplectic transformation.

For reversible operations the Clifford condition is in general
equivalent to the Weyl covariance (for general theory on covariant
channels we refer to \cite{Scu79} and for the special case of Weyl
covariance to \cite{Hol02,Hol04}) of the quantum channel:

%%%
\begin{prop}
An automorphism $T$ on the Weyl algebra $\Alg$ fulfills the Clifford
condition~(\ref{cqca}) if and only if the Weyl covariance
\begin{equation}
T(\weyl{\eta}A\weyl{\eta})=\weyl{\1t\eta}T(A)\weyl{\1t\eta}^\ast\quad\forall\eta\in\glph{s}
\end{equation}
holds for all operators $A\in\Alg$ and some symplectic
transformation $\1t$.
\end{prop}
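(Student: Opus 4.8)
The plan is to prove both implications directly from the $\ast$-homomorphism property of $T$ together with the Weyl commutation relation~(\ref{com-ccr}); the whole difficulty sits in the reverse implication, so I would dispatch the forward one first. Throughout I read the left-hand side as the adjoint action $T(\weyl{\eta}A\weyl{\eta}^\ast)$ matching the conjugation on the right.

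For the forward direction (Clifford $\Rightarrow$ covariance) I would simply insert the defining relation~(\ref{cqca}) into the adjoint action. Since $T$ is a $\ast$-homomorphism, $T(\weyl{\eta}A\weyl{\eta}^\ast)=T(\weyl{\eta})T(A)T(\weyl{\eta})^\ast$; substituting $T(\weyl{\eta})=\qcaph{\eta}\weyl{\1t\eta}$ and using that $\qcaph{\eta}$ is unimodular, the scalar phases $\qcaph{\eta}$ and $\overline{\qcaph{\eta}}$ cancel and one is left with exactly $\weyl{\1t\eta}T(A)\weyl{\1t\eta}^\ast$. That $\1t$ is symplectic has already been recorded in the discussion preceding the statement, so this direction is routine.

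The reverse direction (covariance $\Rightarrow$ Clifford) is where the work lies, and the idea is to feed $A=\weyl{\xi}$ into the covariance and exploit that the Weyl operators diagonalise conjugation. On the left, (\ref{com-ccr}) gives $\weyl{\eta}\weyl{\xi}\weyl{\eta}^\ast=\uroot{p}^{\spf{\xi}{\eta}}\weyl{\xi}$, so the covariance becomes
\begin{equation}
\weyl{\1t\eta}\,T(\weyl{\xi})\,\weyl{\1t\eta}^\ast=\uroot{p}^{\spf{\xi}{\eta}}\,T(\weyl{\xi})\qquad\forall\eta\in\glph{s}.
\end{equation}
Thus $T(\weyl{\xi})$ is a simultaneous eigenvector of conjugation by every $\weyl{\1t\eta}$, with a character fixed by $\xi$. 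To identify it I would invoke locality: $\weyl{\xi}$ is localised in a finite region, so by finite propagation $T(\weyl{\xi})$ lies in a local algebra $\alg\Lambda$, where the Weyl operators form a basis and one may write $T(\weyl{\xi})=\sum_\mu c_\mu\weyl{\mu}$. Conjugation multiplies the $\mu$-term by $\uroot{p}^{\spf{\mu}{\1t\eta}}$, so comparing characters forces $\spf{\mu}{\1t\eta}=\spf{\xi}{\eta}$ for every $\mu$ with $c_\mu\neq0$; using that $\1t$ is symplectic, $\spf{\xi}{\eta}=\spf{\1t\xi}{\1t\eta}$, whence $\spf{\mu-\1t\xi}{\1t\eta}=0$ for all $\eta$.

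The closing step is to conclude $\mu=\1t\xi$, which requires that $\1t\eta$ ranges over a set on which $\Spf$ is non-degenerate; since $\1t$ is an invertible symplectic transformation its image is all of $\glph{s}$, and non-degeneracy of $\Spf$ then kills $\mu-\1t\xi$. Hence only $\mu=\1t\xi$ survives, $T(\weyl{\xi})=c_{\1t\xi}\weyl{\1t\xi}$, and setting $\qcaph{\xi}:=c_{\1t\xi}$ recovers~(\ref{cqca}); that $\qcaph$ takes values in $\ugroup1$ follows because $T$ maps the unitary $\weyl{\xi}$ to a unitary while $\weyl{\1t\xi}$ is already unitary. The main obstacle I anticipate is precisely the bookkeeping at this last stage: making the Weyl-basis expansion rigorous on the infinite quasilocal algebra $\Alg$ (handled by confining $T(\weyl{\xi})$ to a finite $\alg\Lambda$ via locality) and ensuring $\1t$ is genuinely surjective so that non-degeneracy of $\Spf$ can be applied.
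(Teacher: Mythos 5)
Your proposal takes essentially the same route as the paper's proof: the forward direction by cancellation of the unimodular phases $\qcaph{\eta}$, and the reverse direction by inserting $A=\weyl{\xi}$ into the covariance, reading off from (\ref{com-ccr}) the relation $\weyl{\1t\eta}T(\weyl{\xi})\weyl{\1t\eta}^\ast=\uroot{p}^{\spf{\xi}{\eta}}T(\weyl{\xi})$, and then using symplecticity of $\1t$ together with non-degeneracy of $\Spf$ to force $T(\weyl{\xi})$ to be a scalar multiple of $\weyl{\1t\xi}$. Your reading of the left-hand side as the adjoint action $T(\weyl{\eta}A\weyl{\eta}^\ast)$ is also the paper's intent, as its own computation shows.

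The one step that is not licensed as written is your appeal to ``locality'' and ``finite propagation'' to confine $T(\weyl{\xi})$ to a local algebra $\alg\Lambda$. The proposition is stated for an \emph{arbitrary} automorphism $T$ of $\Alg$; no neighborhood scheme is among the hypotheses (indeed, the point of the statement is to characterize the Clifford property for automorphisms, with the QCA structure entering only afterwards). For a general automorphism, $T(\weyl{\xi})$ is only a norm limit of local elements, so the finite expansion $T(\weyl{\xi})=\sum_\mu c_\mu\weyl{\mu}$ has no a priori justification. The gap is repairable without locality: either define Fourier coefficients $c_\mu=\tau\bigl(\weyl{\mu}^\ast T(\weyl{\xi})\bigr)$ with respect to the unique normalized trace on the quasilocal algebra (the Weyl operators are an orthonormal basis of the tracial GNS space, and faithfulness of the trace lets your character comparison go through for all $\mu\in\glph{s}$), or, more simply, observe that $V:=\weyl{\1t\xi}^\ast T(\weyl{\xi})$ commutes with every $\weyl{\1t\eta}$ by the same character computation, hence — using surjectivity of $\1t$, which you correctly flag as necessary — with a dense subalgebra of $\Alg$; since $\Alg$ has trivial center, $V$ is a unimodular scalar and (\ref{cqca}) follows. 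Note that the paper's parenthetical remark about finite supports concerns the phase-space vectors $\xi$ and $\1t\eta$ (finitely supported by the very definition of $\glph{s}$), not a localization of $T(\weyl{\xi})$, so the paper's proof does not rest on the locality you invoke.
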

%%%
\begin{proof}
Because the Weyl operators form a basis of $\Alg$ we just have to
insert $\weyl{\xi}$ for some $\xi\in\glph{s}$ in the covariance
condition, which yields the equation
$\uroot{p}^{\spf{\xi}{\eta}}T(\weyl{\xi})=\weyl{\1t\eta}T(\weyl{\xi})\weyl{\1t\eta}^\ast$.
If $T$ is a Clifford automorphism we have already seen that $\1t$ is
a symplectic transformation and obviously fulfills this equation. In
the inverse direction we get that $T(\weyl{\xi})$ must be a multiple
of $\weyl{\1t\xi}$, because the relation must hold for all
$\eta\in\glph{s}$ and the symplectic form is non degenerate (note
that the support of the phase space vectors is finite and that $\1t$
maps therefore finitely supported vectors to finitely supported
vectors, so the commutation relations can be checked in a finite
dimensional space).
\end{proof}
%%%

Since a QCA is a translationally invariant automorphism on the
quasilocal algebra, it suffices that the Clifford condition holds
for the local rule, e.g. the QCA restricted to operators which are
localized in a single cell. Furthermore, because of the Weyl
relations on a single cell, we only need to specify the image of the
Weyl operators $\weyl{1,0}$ and $\weyl{0,1}$. To some extend we are
free in the choice of the phases
$\qcaph{1,0},\qcaph{0,1}\in\ugroup1$, since these phases do not
interfere with the commutation relations for the local rule. The
only condition is that some power of a Weyl operator is always equal
to $\I$ (we will specify this below), and so these phases must be
some roots of unity. The two phases $\qcaph{1,0}$ and $\qcaph{0,1}$
completely determine the function $\Qcaph$.

Of course $\1t$ and $\Qcaph$ must be translationally invariant,
because $T$ is translationally invariant. Using the homomorphism
property of the QCA and equation~(\ref{ccr}) we get
$\qcaph{\xi+\eta}\weyl{\1t(\xi+\eta)}=\qcaph{\xi}\qcaph{\eta}\uroot{p}^{\blf{\xi}{\eta}-\blf{\1t\xi}{\1t\eta}}\weyl{\1t\xi+\1t\eta}$,
so -- because the Weyl operators form a basis -- the transformation
$\1t$ must be linear and the phase function must fulfill
\begin{equation}\label{coboundary}
\qcaph{\xi+\eta}=\qcaph{\xi}\qcaph{\eta}\uroot{p}^{\blf{\xi}{\eta}-\blf{\1t\xi}{\1t\eta}}\;,
\end{equation}
which enables us to calculate the phase $\qcaph{\xi}$ for each
$\xi\in\glph{s}$, if the local rule and therefore $\1t$ and the
phases $\qcaph{1,0}$ and $\qcaph{0,1}$ are given. In total we get
the following theorem:

%%%
\begin{thm} \label{thm-1} If $T$ is a Clifford quantum cellular automaton (equation (\ref{cqca})) on the Weyl algebra
$\Alg$, then $\1t$ is a translationally invariant linear symplectic
transformation (``symplectic cellular automaton'') and the phase
function $\Qcaph$ fulfills equation~(\ref{coboundary}).
\end{thm}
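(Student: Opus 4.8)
The plan is to read off all three properties of $\1t$ together with the coboundary relation for $\Qcaph$ by pushing the defining algebraic relations of the Weyl operators through the homomorphism $T$ and comparing the resulting expressions. The only structural inputs beyond the Clifford condition~(\ref{cqca}) are that $T$ is a translationally invariant automorphism and that the Weyl operators $\{\weyl{\zeta}\}_{\zeta\in\glph{s}}$ are linearly independent; all computations take place in the finite-dimensional local algebras, since every phase-space vector in sight has finite support. The theorem then simply assembles three separate comparisons.

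First I would establish that $\1t$ is symplectic by applying $T$ to the commutation relation~(\ref{com-ccr}). Substituting $T(\weyl{\xi})=\qcaph{\xi}\weyl{\1t\xi}$ on both sides, the scalar phases $\qcaph{\xi}$ and $\qcaph{\eta}$ occur on each side and cancel, so that $T$ of~(\ref{com-ccr}) becomes $\weyl{\1t\eta}\weyl{\1t\xi}=\uroot{p}^{\spf{\xi}{\eta}}\weyl{\1t\xi}\weyl{\1t\eta}$. Since~(\ref{com-ccr}) applied directly to the pair $(\1t\xi,\1t\eta)$ has the identical form with exponent $\spf{\1t\xi}{\1t\eta}$, the two scalar factors must agree, yielding $\spf{\1t\xi}{\1t\eta}=\spf{\xi}{\eta}$, i.e. $\1t$ preserves the symplectic form.

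Next I would derive linearity of $\1t$ and equation~(\ref{coboundary}) at once by applying $T$ to the Weyl relation~(\ref{ccr}). The left-hand side gives $\qcaph{\xi+\eta}\weyl{\1t(\xi+\eta)}$. For the right-hand side I would use the homomorphism property together with~(\ref{cqca}) to obtain $\uroot{p}^{\blf{\xi}{\eta}}\qcaph{\xi}\qcaph{\eta}\weyl{\1t\xi}\weyl{\1t\eta}$, and then rewrite the product by~(\ref{ccr}) as $\weyl{\1t\xi}\weyl{\1t\eta}=\uroot{p}^{-\blf{\1t\xi}{\1t\eta}}\weyl{\1t\xi+\1t\eta}$. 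Equating the two expressions and matching the displacement labels of the (independent) Weyl operators forces $\1t(\xi+\eta)=\1t\xi+\1t\eta$, while matching the scalar prefactors yields exactly~(\ref{coboundary}). Homogeneity over the prime field $\field=\Field{p}$ follows by iterating additivity, so $\1t$ is linear.

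Finally, translation invariance of $\1t$ comes from commuting $T$ with the shift automorphism: since $T\Tra{x}=\Tra{x}T$ and $\Tra{x}\weyl{\zeta}=\weyl{\tra{x}\zeta}$ by~(\ref{translation}), applying both sides to $\weyl{\xi}$ and comparing displacements gives $\1t\tra{x}=\tra{x}\1t$, so $\1t$ is a symplectic cellular automaton. I expect the only genuinely delicate point to be the repeated appeal to linear independence of the Weyl operators that licenses separating the ``displacement part'' ($\1t$) from the ``phase part'' ($\Qcaph$) in each comparison; this is justified exactly as in the proof of the preceding proposition, because on any finite support the Weyl operators form a basis of the corresponding matrix algebra, so distinct displacement vectors give independent operators.
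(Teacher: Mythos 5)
Your proposal is correct and takes essentially the same route as the paper: the paper also obtains symplecticity of $\1t$ by applying $T$ to the commutation relation~(\ref{com-ccr}), derives linearity of $\1t$ and the coboundary identity~(\ref{coboundary}) simultaneously by applying $T$ to the Weyl relation~(\ref{ccr}) and invoking linear independence of the Weyl operators, and deduces translation invariance of $\1t$ and $\Qcaph$ directly from that of $T$. Your explicit intertwining argument $\1t\tra{x}=\tra{x}\1t$ and the remark on homogeneity over $\Field{p}$ merely spell out steps the paper leaves implicit.
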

%%%

This means that we are able to study Clifford QCAs -- up to some
phase function -- in terms of a classical cellular automaton on the
phase space $\glph{s}$. It is well known that Clifford operations
allow an efficient classical description, which in the case of QCAs
turned out to be the group of classical symplectic cellular
automata. In the rest of the paper we will study the structure of
this kind of cellular automata, thereby characterizing the structure
of CQCAs.

We would like to give a closed expression for the phase function
$\Qcaph$, but this has to be done in dependence of the cell
dimension. First we consider the case $p\ne 2$. Then all Weyl
operators fulfill $\weyl{\xi}^p=\I$ and because of $T(\I)=\I$ the
phase $\qcaph{\xi}$ must be a $p^{\rm th}$ root of unity. So we can
write $\qcaph{\xi}=\uroot{p}^{\varphi(\xi)}$ with a function
$\varphi:\glph{s}\to\field$, which then has to fulfill
$\varphi(\xi+\eta)=\varphi(\xi)+\varphi(\eta)+\blf{\xi}{\eta}-\blf{\1t\xi}{\1t\eta}$.
This equation determines the function $\varphi(\xi)$ up to some
linear functional $\lambda(\xi)$, which is given by the choice of
the phases $\varphi(1,0)$ and $\varphi(0,1)$. The bilinear form
$\blf{\xi}{\eta}-\blf{\1t\xi}{\1t\eta}$ is symmetric, because $\1t$
is a symplectic transformation. If $p\ne 2$ we may divide by $2$ and
the general solution is
$\varphi(\xi)=\frac{1}{2}(\blf{\xi}{\xi}-\blf{\1t\xi}{\1t\xi})+\lambda(\xi)$.

The case of qubits ($p=2$) is slightly more complicated because the
Weyl operators fulfill $\weyl{\xi}^2=(-1)^{\blf{\xi}{\xi}}\I$. So
the phase function must fulfill $\qcaph{\xi}=\8i^{\varphi(\xi)}$
with $\varphi:\glph{s}\to\Field{4}$. We replace the form
$\Blf:\glph{s}\times\glph{s}\to\Field{2}$ by the bilinear form
$\Blfb:\glph{s}\times\glph{s}\to\Field{4}$, which is formally given
by $\Blfb=2\Blf$, so the values of $\Blfb$ are even elements of
$\Field{4}$ and the Weyl relation becomes
$\weyl{\xi+\eta}=\8i^{\blfb{\xi}{\eta}}\weyl{\xi}\weyl{\eta}$. This
means that $\varphi$ fulfills
$\varphi(\xi+\eta)=\varphi(\xi)+\varphi(\eta)+\gamma(\xi,\eta)$ with
the form $\gamma(\xi,\eta)=\blfb{\xi}{\eta}-\blfb{\1t\xi}{\1t\eta}$.
This form is symmetric, so in the decomposition
$\gamma(\xi,\eta)=\sum_{i,j}\gamma_{ij}\xi_i\eta_j$ we have
$\gamma_{ij}=\gamma_{ji}$ and all these elements are even. We can
find $\gamma_i$ with $\gamma_{ii}=2\gamma_i$, but this choice is not
unique in $\Field{4}$ and corresponds exactly to the freedom in the
choice of the phases $\qcaph{1,0}$ and $\qcaph{0,1}$. The solution
for $\varphi$ is then given by
$\varphi(\xi)=\sum_{i<j}\gamma_{ij}\xi_i\xi_j+\sum_i\gamma_i\xi_i$
(note that $\xi_i\in\{0,1\}$ and so $\xi_i^2=\xi_i$ holds).

%%%
\subsection{Algebraic Fourier transform}
%%%
\label{sec-2-3}

We would like to use Fourier transform techniques for the study of
the structural properties of symplectic CA, because of translational
invariance, and because we know that this is very helpful for
symplectic CA with continuous single cell phase
space~\cite{KruWer07}. So we have to apply a Fourier transform to
the functions $\ffunct{s}$. But the values of these functions are in
the finite field $\field$ and multiplying such a value with a
complex number does not really match. It turns out that a slight
modification of the usual Fourier transform does as well. For a
function $f\in\ffunct{s}$ we define
\begin{equation}\label{fourier}
\hat f(\vari)=\sum_{x\in\7Z^s}f(x)\vari^x\;,
\end{equation}
with $\vari^x=\vari_1^{x_1}\cdots \vari_s^{x_s}$. Now the
transformed function $\hat f$ is a polynomial or, more precisely, a
Laurent-polynomial in the variables $\vari_1,\dots,\vari_s$ with
coefficients in $\field$, which will be denoted by
$\fring{s}=\poly{\field}{\Vari{1}{s}}$. Note that we have indeed
polynomials, because the functions in $\ffunct{s}$ are finitely
supported. Equation~(\ref{fourier}) identifies functions of
$\ffunct{s}$ with polynomials $\fring{s}$ and this identification is
unique, so $\ffunct{s}$ and $\fring{s}$ are isomorphic. The usual
Fourier transform would require $\vari_n=e^{ip_n}$. We do not
further specify the domain of the variables, and this approach can
be seen as ``generating function approach'' or ``algebraic Fourier
transform''.

The convolution $f\convol h=\sum_x f(-x)\tra{x}h$ is a natural
product\footnote{With the convolution the set $\ffunct{s}$ becomes a
``commutative division ring''.} of functions in $\ffunct{s}$. The
invertible elements with respect to this operation are the functions
which are supported on a single lattice point, e.g. $f=c\delta_x$
($\delta_x$ is the Kronecker-delta) with $c\in\field$ and
$x\in\7Z^s$, and the unit element is $\delta_0$. The nice fact about
Fourier transform is that the convolution turns into a usual product
which is also true for our algebraic version:
\begin{equation}
\widehat{f\convol h}=\hat{f}\hat{h}\quad f,h\in\ffunct{s}\;.
\end{equation}
Note that the invertible polynomials are monomials\footnote{This
will be different when we go to periodic boundary conditions.}, e.g.
they are of the form $u^x$. Of course the unit element is the
constant $1=\hat\delta_0$. Another important operation is the
reflection operation (or involution) $\radj{f}(x):=f(-x)$ for
$f\in\ffunct{s}$. Obviously the reflection preserves the
convolution, e.g. $\radj{f\convol h}=\radj{f}\convol\radj{h}$, and
for the transformed function we have
$\radj{\hat{f}}(u)=\hat{f}(u^{-1})$.

The phase space $\glph{s}$ consists of two-dimensional tuples of
functions from $\ffunct{s}$ and all operations can be defined
component-wise\footnote{With the component-wise convolution the
phase space is a two-dimensional $\ffunct{s}$-module.}, so we get
that the phase space is isomorphic to $\glph{s}\cong\fring{s}^2$. We
would like to study the structure of symplectic CA in this
polynomial space. The transformation of an operation
$\1t:\glph{s}\to\glph{s}$ is defined according to
$\hat{\1t}\hat\xi=\widehat{\1t\xi}$, so $\hat{\1t}$ is a mapping
from $\fring{s}^2$ to $\fring{s}^2$. We introduce the symplectic
form $\FMspf:\fring{s}^2\times\fring{s}^2\to\fring{s}$ by
\begin{equation}
\fmspf{\xi}{\eta}=\radj{\xi_+}\eta_--\radj{\xi_-}\eta_+\;,\quad\xi,\eta\in\fring{s}^2\;,
\end{equation}
which can be written as $\fmspf{\xi}{\eta}=\det(\radj\xi,\eta)$,
whereby $(\xi,\eta)$ denotes the $2\times 2$-matrix
\begin{equation}
(\xi,\eta)=\left(\begin{array}{cc}\xi_+&\eta_+\\\xi_-&\eta_-\end{array}\right)
\end{equation}
with polynomial entries. The symplectic form $\FMspf$ is the best
fitting symplectic form for symplectic CA, because it combines both
the basic symplectic form $\Spf$ as well as the translation
invariance:

%%%
\begin{prop}\label{sca-fourier}
A linear operation $\1t$ on the phase space $\glph{s}$ is a
symplectic cellular automaton, if and only if, the transformed
operation $\hat{\1t}$ leaves the symplectic form $\FMspf$ invariant.
\end{prop}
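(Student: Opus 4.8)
The plan is to exhibit $\FMspf$ as a generating function that packages the scalar symplectic form $\Spf$ together with all lattice translations, and then to read off the two defining properties of a symplectic cellular automaton---translation invariance and invariance of $\Spf$---from a single identity. First I would unwind the definitions: for $\xi,\eta\in\glph{s}$ write $\fmspf{\hat\xi}{\hat\eta}=\radj{\widehat{\xi_+}}\,\widehat{\eta_-}-\radj{\widehat{\xi_-}}\,\widehat{\eta_+}$ and expand each product as a Laurent series in $\vari$, using $\radj{\widehat{\xi_+}}(\vari)=\sum_x\xi_+(x)\vari^{-x}$. Collecting powers, the coefficient of $\vari^{n}$ is $\sum_x[\xi_+(x)\eta_-(x+n)-\xi_-(x)\eta_+(x+n)]=\spf{\xi}{\tra{-n}\eta}$, so that
\[
\fmspf{\hat\xi}{\hat\eta}=\sum_{n\in\7Z^s}\spf{\xi}{\tra{-n}\eta}\,\vari^{n}.
\]
I would also record the Fourier dictionary for shifts, $\widehat{\tra{x}\xi}=\vari^{x}\hat\xi$, so that $\1t$ is translationally invariant precisely when $\hat{\1t}$ commutes with multiplication by every monomial $\vari^{x}$, i.e. when $\hat{\1t}$ is a homomorphism of the $\fring{s}$-module $\fring{s}^2$.

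For the forward implication I would assume $\1t$ is a symplectic cellular automaton and compute directly. Using $\hat{\1t}\hat\xi=\widehat{\1t\xi}$ and the identity above,
\[
\fmspf{\hat{\1t}\hat\xi}{\hat{\1t}\hat\eta}=\sum_{n}\spf{\1t\xi}{\tra{-n}\1t\eta}\,\vari^{n}=\sum_{n}\spf{\1t\xi}{\1t\tra{-n}\eta}\,\vari^{n}=\sum_{n}\spf{\xi}{\tra{-n}\eta}\,\vari^{n}=\fmspf{\hat\xi}{\hat\eta},
\]
where translation invariance is used to pull $\1t$ through $\tra{-n}$ and the invariance of the scalar form $\Spf$ is used in the last step. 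Hence $\hat{\1t}$ leaves $\FMspf$ invariant.

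For the converse I would assume $\hat{\1t}$ preserves $\FMspf$ and compare coefficients. The constant ($n=0$) term immediately yields $\spf{\1t\xi}{\1t\eta}=\spf{\xi}{\eta}$, so $\1t$ preserves the scalar symplectic form; the higher coefficients then record the same invariance for every translate. It remains to recover translation invariance, and this is the step I expect to be the main obstacle, since it is the point at which the $\fring{s}$-valued nature of $\FMspf$ really enters. In the natural reading, ``the transformed operation $\hat{\1t}$'' is a polynomial matrix acting on $\fring{s}^2$, so $\hat{\1t}$ is $\fring{s}$-linear and translation invariance holds by the Fourier dictionary above; the delicate part is then only to confirm that the $\vari$-dependence of $\FMspf$ faithfully bookkeeps the full translation orbit, so that no information is lost in replacing the family $\{\spf{\xi}{\tra{-n}\eta}\}_n$ by the single polynomial $\fmspf{\hat\xi}{\hat\eta}$.

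If one wishes to avoid building module linearity into $\hat{\1t}$ and start only from a $\field$-linear $\1t$, I would instead derive translation invariance from preservation: subtracting the already-established symplectic identity from the $n$-th coefficient gives $\spf{\1t\xi}{\,\tra{-n}\1t\eta-\1t\tra{-n}\eta\,}=0$ for all $\xi$, so the defect $\tra{-n}\1t\eta-\1t\tra{-n}\eta$ lies in the symplectic complement of the image of $\1t$; invoking non-degeneracy of $\Spf$ (equivalently of $\FMspf$ over $\fring{s}$) together with the fact that $\1t$ is now symplectic forces the defect to vanish, giving $\1t\tra{-n}=\tra{-n}\1t$. Combining this with the symplectic property then shows $\1t$ is a symplectic cellular automaton, closing the equivalence.
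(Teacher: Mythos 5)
Your proposal is correct and follows essentially the same route as the paper: your generating-function identity $\fmspf{\hat\xi}{\hat\eta}=\sum_{n}\spf{\xi}{\tra{-n}\eta}\,\vari^{n}$ is exactly the paper's intermediate form $\mspf{\xi}{\eta}=\spf{\xi}{\tra{(\cdot)}\eta}$ after Fourier transform (up to an immaterial $\vari\mapsto\vari^{-1}$ convention), and both directions---pulling $\1t$ through the translates in the forward implication, and in the converse reading off $\Spf$-invariance from the constant coefficient and commutation with translations from the remaining ones---coincide with the paper's argument. Your final appeal to non-degeneracy to kill the defect $\tra{-n}\1t\eta-\1t\tra{-n}\eta$ is the same step the paper takes implicitly, both relying on the image of $\1t$ being symplectically total, which holds in the intended setting where $\1t$ is invertible.
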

%%%
\begin{proof}
For this proof we introduce the form
$\mspf{\xi}{\eta}=\spf{\xi}{\tra{(\cdot)}\eta}$ for
$\xi,\eta\in\glph{s}$. A straightforward computation shows that
$\mspf{\xi}{\eta}=\radj{\xi_+}\convol \eta_--\radj{\xi_-}\convol
\eta_+$ holds. This means we have
$\widehat{\mspf{\xi}{\eta}}=\fmspf{\hat\xi}{\hat\eta}$, so $\FMspf$
is the Fourier transform of $\Mspf$ and the invariance of $\Mspf$
under some operation $\1t$ is equivalent to the invariance of
$\FMspf$ under $\hat{\1t}$.

Now suppose $\1t$ is a symplectic CA. Then we have for all
$x\in\7Z^s$ that
$\mspf{\1t\xi}{\1t\eta}(x)=\spf{\1t\xi}{\tra{x}\1t\eta}=\spf{\1t\xi}{\1t\tra{x}\eta}=\spf{\xi}{\tra{x}\eta}=\mspf{\xi}{\eta}(x)$
holds, because $\1t$ is translationally invariant and preserves
$\Spf$, so $\Mspf$ is invariant under $\1t$.

If $\1t$ leaves $\Mspf$ invariant, this holds also for
$\Spf=\mspf{\cdot}{\cdot}(0)$, and because of this
$\spf{\1t\xi}{\tra{x}\1t\eta}=\spf{\xi}{\tra{x}\eta}=\spf{\1t\xi}{\1t\tra{x}\eta}$
holds for all $x\in\7Z^s$ and all $\xi,\eta\in\glph{s}$ and so $\1t$
must commute with the translations $\tra{x}$.
\end{proof}
%%%

So we can characterize symplectic CA in ``momentum space'' by
studying the linear transformations on $\fring{s}^2$ which leave the
symplectic form $\FMspf$ invariant. In the subsequent we will mainly
work in the polynomial space $\fring{s}$. Therefore we will just
identify the phase space $\glph{s}$ with $\fring{s}^2$ and we will
omit the symbol $\hat{\hspace{0.2cm}}$ for the Fourier transform of
transformations.

%%%
\subsection{Isotropic subspaces}
%%%

As we have already seen in the introduction, commutation relations
are important for the verification of local rules of reversible
QCAs, because a QCA is a homomorphism and preserves the algebraic
structure. Especially the images of $X_x$ and $Z_x$ must be
``self-commuting'', meaning that $[T(X_x),T(X_y)]=0=[T(Z_x),T(Z_y)]$
holds for all $x,y\in\7Z^s$. So the operators $T(X_x)$ generate a
translationally invariant abelian algebra. For Weyl operators
translationally invariant abelian algebras correspond exactly to
isotropic subspaces of $\fring{s}^2$ with respect to the symplectic
form $\FMspf$ and these subspaces can be easy connected to
translationally invariant stabilizer states. Therefore it is
important for us to study the structure of these subspaces.

A $\fring{s}$-subspace\footnote{More precisely one should say
submodule, but we will use the more convenient word subspace.}
$\2I\subset\fring{s}^2$ is called isotropic, if for all
$\xi,\eta\in\2I$ the symplectic form $\fmspf{\xi}{\eta}=0$ vanishes.
An isotropic $\fring{s}$-subspace $\2I$ is called maximally
isotropic, if the relation $\fmspf{\xi}{\eta}=0$ for all $\xi\in\2I$
implies that $\eta\in\2I$ holds.

For us the form of the generators of isotropic, in particular
maximally isotropic, $\fring{s}$-subspaces is important, because
this is a substantial step for the characterization of local rules
of CQCAs and translationally invariant stabilizer states. The
following lemma shows that a generator $\xi$ of a singly generated
maximally isotropic subspace is reflection invariant and that the
components $\xi_+$ and $\xi_-$ are coprime. We will call a
polynomial $f\in\fring{s}$ (or a tuple of those) reflection
invariant for some half integer lattice point $a\in
\frac{1}{2}\7Z^s$, if $f=\vari^{2a}\radj f$ holds. The greatest
common divisor of two polynomials $f,h\in\fring{s}$ will be denoted
by $\8{gcd}(f,h)$. Note that the greatest common divisor is defined
only up to invertible elements. We will simply write
$\8{gcd}(f,h)=1$, if $f$ and $h$ are coprime.

%%%
\begin{lem}\hfill\\[-0.8cm]
\label{lem-iso-1}
\begin{enumerate}
\item
If the subspace $\fring{s}\xi\subset\fring{s}^2$ is maximally
isotropic, we have $\8{gcd}(\xi_+,\xi_-)=1$.
\item
If the subspace $\fring{s}\xi\subset\fring{s}^2$ is maximally
isotropic, $\xi$ is reflection invariant to some point $a\in
\frac{1}{2}\7Z^s$.
\item
Every reflection invariant polynomial generates an isotropic
$\fring{s}$-subspace.
\end{enumerate}

\end{lem}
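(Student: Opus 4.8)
The plan is to prove the three parts by working directly with the symplectic form $\FMspf$ in the polynomial ring $\fring{s}$, exploiting the fact that $\fring{s}$ is a ring of Laurent polynomials where greatest common divisors and coprimality are well-behaved (it is a localization of a polynomial ring, hence a UFD). Throughout I write $\xi=(\xi_+,\xi_-)$.

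For part (3), the computation is essentially immediate and I would dispatch it first. If $\xi$ is reflection invariant to $a$, then $\radj{\xi_\pm}=\vari^{2a}\xi_\pm$... wait, the defining relation is $\xi=\vari^{2a}\radj\xi$, i.e. $\xi_\pm=\vari^{2a}\radj{\xi_\pm}$, equivalently $\radj{\xi_\pm}=\vari^{-2a}\xi_\pm$. I would plug this into $\fmspf{\xi}{\xi}=\radj{\xi_+}\xi_--\radj{\xi_-}\xi_+$ and read off that the two terms are each $\vari^{-2a}\xi_+\xi_-$, so they cancel and $\fmspf{\xi}{\xi}=0$. Since the form is $\fring{s}$-bilinear and antisymmetric, $\fmspf{f\xi}{g\xi}=\radj f\,g\,\fmspf{\xi}{\xi}=0$ for all $f,g\in\fring{s}$, so $\fring{s}\xi$ is isotropic. (I should double-check the conjugation convention on scalars, since $\FMspf$ is sesquilinear with respect to the reflection involution on the first slot.)

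For part (1), suppose $\fring{s}\xi$ is maximally isotropic and let $d=\8{gcd}(\xi_+,\xi_-)$. Write $\xi=d\,\xi'$ with $\8{gcd}(\xi'_+,\xi'_-)=1$. The maximality hypothesis says: any $\eta$ with $\fmspf{\eta}{\xi}=0$ already lies in $\fring{s}\xi$. My strategy is to produce an $\eta$ orthogonal to $\xi$ that is \emph{not} a polynomial multiple of $\xi$ unless $d$ is a unit. The natural candidate is something like $\eta=\radj{\xi'}$ suitably placed, or more directly $\xi'$ itself: since $\fmspf{\cdot}{\cdot}$ is $\fring{s}$-linear, $\fmspf{\xi'}{\xi}=d\,\radj d^{-1}\cdots$ — I need to track factors of $d$ carefully. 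The cleaner route is to exhibit $\eta\perp\xi$ with $\eta\notin\fring{s}\xi$ whenever $d$ is a nonunit: because $\xi'$ has coprime coordinates, $\xi'\in\fring{s}\xi=\fring{s}d\xi'$ would force $d$ to be invertible, giving the contradiction. So the real content is checking $\fmspf{\xi'}{\xi}=0$, i.e. that scaling the generator down keeps it in the orthogonal complement — which should follow from antisymmetry of $\FMspf$.

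Part (2) is where I expect the \textbf{main obstacle}. Here I must extract reflection invariance from maximality. The idea is that maximal isotropy should force $\fring{s}\xi=(\fring{s}\xi)^\perp$ (the orthogonal complement equals the subspace itself), and then $\radj\xi$, which is also a generator of an isotropic line, must lie in $\fring{s}\xi$. Concretely, one checks $\fmspf{\radj\xi}{\xi}=\radj{\radj{\xi_+}}\xi_- - \radj{\radj{\xi_-}}\xi_+ = \xi_+\xi_--\xi_-\xi_+=0$, so $\radj\xi\in(\fring{s}\xi)^\perp=\fring{s}\xi$; hence $\radj\xi=g\xi$ for some $g\in\fring{s}$. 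Applying reflection again gives $\xi=\radj g\,\radj\xi=\radj g\,g\,\xi$, and since by part (1) $\xi$ has coprime (hence nonzero) coordinates, we may cancel to get $\radj g\,g=1$. The only self-reflection-inverse units in $\fring{s}$ are the monomials $\pm\vari^{2a}$ with $a\in\frac12\7Z^s$ (here I would invoke that units are monomials, as noted in the excerpt, and solve $g\,\radj g=1$), which yields exactly $\radj\xi=\vari^{-2a}\xi$, i.e. reflection invariance to $a$. The delicate points are justifying $(\fring{s}\xi)^\perp=\fring{s}\xi$ from maximality together with part (1), and the classification of solutions of $g\radj g=1$ — that is the step I would spend the most care on.
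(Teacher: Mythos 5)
Your outline coincides with the paper's own proof in all three parts: part~(3) by direct substitution of $\xi=\vari^{2a}\radj\xi$ into the form, part~(1) by dividing out $d=\8{gcd}(\xi_+,\xi_-)$ and contradicting maximality, part~(2) by computing $\fmspf{\radj\xi}{\xi}=0$ and analyzing $\radj\xi=g\xi$. However, part~(1) contains a genuinely wrong justification: you claim $\fmspf{\xi'}{\xi}=0$ ``should follow from antisymmetry of $\FMspf$''. The form is \emph{not} alternating; it is sesquilinear for the reflection involution and satisfies only $\fmspf{\eta}{\zeta}=-\radj{\fmspf{\zeta}{\eta}}$, and for instance $\zeta={1\choose\vari}$ gives $\fmspf{\zeta}{\zeta}=\vari-\vari^{-1}\neq0$. (If antisymmetry alone killed $\fmspf{\zeta}{\zeta}$, part~(3) would be vacuous and reflection invariance irrelevant.) The ingredient you never invoked, and which the paper does use, is the isotropy hypothesis itself: $0=\fmspf{\xi}{\xi}=\radj{d}\,d\,\fmspf{\xi'}{\xi'}$, and since $\fring{s}$ is an integral domain and $d\neq0$, this gives $\fmspf{\xi'}{\xi'}=0$, hence $\fmspf{\xi'}{\xi}=d\,\fmspf{\xi'}{\xi'}=0$, and sesquilinearity extends this to $\xi'\perp\fring{s}\xi$. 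With that one-line repair your part~(1) is exactly the paper's argument, including the observation that $\xi'\in\fring{s}\,d\xi'$ would force $d$ invertible.

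In part~(2) you diverge mildly from the paper at one step, validly: the paper forces $f$ in $\radj\xi=f\xi$ to be a unit by applying part~(1) to $\radj\xi$ (whose components are again coprime), while you reflect once more to obtain $g\radj{g}=1$, which directly exhibits $\radj{g}$ as an inverse. Your route even surfaces a point the paper's proof silently glosses: the units of $\fring{s}$ are $c\vari^x$ with $c\in\field^\times$, and $g\radj{g}=1$ forces only $c^2=1$, i.e.\ $g=\pm\vari^{-2a}$ --- and you then drop the minus sign without argument, exactly as the paper does when it writes that $f$ ``must be invertible and therefore a monomial $\vari^{-2a}$''. For $p=2$ there is nothing to exclude; for odd $p$ and $s=1$ (the case the later one-dimensional results rest on) the case $g=-\vari^{-2a}$ is ruled out by evaluating $\xi_\pm=-\vari^{2a}\radj{\xi_\pm}$ at $\vari=1$, which yields $\xi_\pm(1)=-\xi_\pm(1)=0$, so $\vari-1$ divides both components, contradicting part~(1); you should supply this. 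Be aware the exclusion does not extend to $s\geq2$, where common vanishing at a point is not a common divisor: e.g.\ over $\Field{3}$ the vector $\xi={\vari_1-\vari_1^{-1}\choose\vari_2-\vari_2^{-1}}$ satisfies $\radj\xi=-\xi$, has coprime components, and generates a maximally isotropic subspace --- a loophole present in the paper's proof as well. Finally, your flagged worry about justifying $(\fring{s}\xi)^\perp=\fring{s}\xi$ is a non-issue: the paper \emph{defines} maximal isotropy as ``$\fmspf{\zeta}{\eta}=0$ for all $\zeta\in\2I$ implies $\eta\in\2I$'', so the single computation $\fmspf{\radj\xi}{\xi}=0$ together with sesquilinearity suffices.
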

%%%
\begin{proof}
Ad 1. Suppose $\fring{s}\xi$ is a maximally isotropic
$\fring{s}$-subspace and $\8{gcd}(\xi_+,\xi_-)=f$ is not invertible.
So we can write $\xi=f\eta$ with $\8{gcd}(\eta_+,\eta_-)=1$, but
$\eta\notin\fring{s}\xi$ since $f$ is not invertible. But we have
that $\fmspf{\xi}{\eta}=f\fmspf{\xi}{\xi}=0$ holds, which is a
contradiction to $\fring{s}\xi$ being maximally isotropic.

Ad 2. Suppose that $\fring{s}\xi$ is maximally isotropic. By 1 we
have $\8{gcd}(\xi_+,\xi_-)=1$. Since
$\fmspf{\radj{\xi}}{\xi}=\xi_+\xi_--\xi_-\xi_+=0$, it follows that
$\radj{\xi}\in\fring{s}\xi$. So we have $\radj{\xi}=f\xi$ with some
polynomial $f\in\fring{s}$. But for the reflected phase space vector
$\radj{\xi}$ we also have that
$\8{gcd}(\radj{\xi}_+,\radj{\xi}_-)=1$, so $f$ must be invertible
and therefore a monomial $f(\vari)=\vari^{-2a}$ for some
$a\in\frac{1}{2}\7Z^s$.

Ad 3. Suppose that $\xi=\vari^{2a}\radj{\xi}$ is reflection
invariant. Then
$\fmspf{\xi}{\xi}=\fmspf{\vari^{2a}\radj{\xi}}{\xi}=\vari^{-2a}(\xi_+\xi_--\xi_-\xi_+)=0$
holds, and $\xi$ generates an isotropic $\fring{s}$-subspace.
\end{proof}
%%%

%%%
\begin{example}\label{stand-ex}\rm
Both $\xi_1=(1+\vari){0\choose 1}$ and $\xi_2={1\choose
\vari+\vari^{-1}}$ are reflection invariant. The corresponding Weyl
operators $\weyl{\xi_1}=Z_0\otimes Z_1$ and $\weyl{\xi_2}=Z_{-1}
\otimes X_0\otimes Z_1$ are the same reading from the left and from
the right (``palindromes''). Both phase space vectors generate
isotropic subspaces. The subspace generated by $\xi_2$ is indeed
maximally isotropic and the components $\xi_{2,+}$ and $\xi_{2,-}$
are coprime, whereas the subspace generated by $\xi_1$ is not
maximally isotropic because $1+\vari$ is a nontrivial common
divisor. This is also clear in terms of operators, because all
operators $Z_x$ commute with $\weyl{\xi_1}$, but cannot be obtained
by products of translates of $\weyl{\xi_1}$. \fin\end{example}
%%%

In particular, the greatest common divisor comes into play. We will
be able to state more results in the one-dimensional case ($s=1$),
due to the fact that the ring of polynomials
$\fring{}:=\fring{1}=\poly{\field}{\vari,\vari^{-1}}$ is
euclidean\footnote{In more abstract words $\fring{}$ is a principal
ideal ring, which means that every ideal in $\fring{}$ is generated
by a single element. For this general algebraic theory we refer to
\cite{Jac75}.}. Especially this means that the euclidean algorithm
can be applied for finding the greatest common divisor of two
polynomials, which is also used for the factorization of wavelet
transformations~\cite{DauSwe98}.

%%%
\begin{lem}[Extended euclidean algorithm for Laurent
polynomials]\label{lem-euclid} Let $\xi\in\fring{}^2$ be a phase
space vector. Then there exist $f_0, f_1\in\fring{}$ such that
\begin{equation}\label{euclid}
f_0\xi_++f_1\xi_-=\8{gcd}(\xi_+,\xi_-)
\end{equation}
holds.
\end{lem}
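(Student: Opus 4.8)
The plan is to adapt the classical extended Euclidean algorithm to the ring $\fring{} = \poly{\field}{\vari,\vari^{-1}}$ of Laurent polynomials over the field $\field = \Field{p}$. The key observation, already flagged in the surrounding text, is that $\fring{}$ is a principal ideal domain (indeed a Euclidean domain once one chooses the right notion of ``size''), so the existence of a B\'ezout-type identity follows from the division algorithm exactly as over $\poly{\field}{\vari}$. The only genuine wrinkle is that $\fring{}$ is the localization of the ordinary polynomial ring $\poly{\field}{\vari}$ at the multiplicative set of powers of $\vari$, so I must account for the negative powers.

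First I would reduce to the ordinary polynomial case. Given $\xi=(\xi_+,\xi_-)\in\fring{}^2$, each component $\xi_\pm$ is a finite sum $\sum_k c_k\vari^k$ with $k$ ranging over some finite set of integers (possibly negative). Since multiplication by a monomial $\vari^m$ is an invertible operation in $\fring{}$ (the monomials are precisely the units, as noted above), I can pick a single large power $\vari^N$ with $N$ sufficiently large that $g_\pm := \vari^N\xi_\pm$ both lie in $\poly{\field}{\vari}$ (ordinary polynomials with nonnegative exponents). Because units do not change the gcd up to associates, we have $\8{gcd}(g_+,g_-) = \vari^N\8{gcd}(\xi_+,\xi_-)$ up to an invertible factor.

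Next I would invoke the standard extended Euclidean algorithm in the genuine polynomial ring $\poly{\field}{\vari}$, which is Euclidean with respect to ordinary degree. This produces polynomials $h_0,h_1\in\poly{\field}{\vari}$ with $h_0 g_+ + h_1 g_- = \8{gcd}(g_+,g_-)$. Substituting $g_\pm = \vari^N\xi_\pm$ and dividing through by $\vari^N$ (legitimate in $\fring{}$, since $\vari^N$ is a unit) gives $h_0\xi_+ + h_1\xi_- = \vari^{-N}\8{gcd}(g_+,g_-)$, and the right-hand side equals $\8{gcd}(\xi_+,\xi_-)$ up to a unit. Absorbing that unit into $h_0,h_1$ yields the desired $f_0,f_1\in\fring{}$ with $f_0\xi_+ + f_1\xi_- = \8{gcd}(\xi_+,\xi_-)$.

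The main point requiring care — though it is not so much an obstacle as a bookkeeping issue — is the phrase ``defined only up to invertible elements'' in the gcd: since the units of $\fring{}$ are exactly the monomials $c\,\vari^m$, the statement $f_0\xi_++f_1\xi_-=\8{gcd}(\xi_+,\xi_-)$ is to be read up to such a monomial, and the reduction above must simply track these monomial factors consistently. One edge case to mention is when one of the components vanishes, say $\xi_-=0$: then $\8{gcd}(\xi_+,\xi_-)=\xi_+$ up to a unit and one takes $f_0=1$, $f_1=0$. With these observations the lemma follows directly from the Euclidean structure of $\poly{\field}{\vari}$ transported through the localization isomorphism.
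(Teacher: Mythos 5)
Your proposal is correct, but it reaches the Bézout identity by a genuinely different route from the paper. You transport the problem into the ordinary polynomial ring $\poly{\field}{\vari}$ by clearing denominators with a large monomial $\vari^N$, run the classical extended Euclidean algorithm there, and pull the identity back through the unit $\vari^N$, absorbing monomial factors into $f_0,f_1$. The paper instead works directly inside $\fring{}$: it defines the Laurent-specific degree $\8{deg}(f)=L_+-L_-$ (the width of the support), performs division with remainder in $\fring{}$ with respect to that degree, and extracts the coefficients $f_0,f_1$ by composing the $2\times 2$ recursion matrices that encode each division step. Your reduction is the more economical argument, since it invokes only the textbook Euclidean algorithm for $\poly{\field}{\vari}$ together with the fact that the units of $\fring{}$ are exactly the monomials $c\,\vari^m$; the one step you gloss over --- that a gcd computed in $\poly{\field}{\vari}$ remains a gcd in the localization $\fring{}$ --- is standard (any common divisor $e\in\fring{}$ may be written $e=\vari^m e'$ with $e'\in\poly{\field}{\vari}$ not divisible by $\vari$, and then $e'$ divides $g_\pm$ already in $\poly{\field}{\vari}$), and your remark about tracking monomial factors covers it; your treatment of the edge case $\xi_-=0$ is also fine. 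What the paper's self-contained approach buys is internal coherence with the rest of the text: the width degree $\8{deg}$ introduced there is precisely the quantity reused in the appendix (Lemma~\ref{lem-reduce}) to drive the factorization of centered symplectic cellular automata into shear and Fourier generators, and the matrix form of the recursion parallels the wavelet-lifting factorizations the authors cite, whereas your argument, while cleaner in isolation, would leave that machinery to be set up separately.
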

%%%
\begin{proof}
We define the degree of a Laurent polynomial
$f=\sum_{x=L_-}^{L_+}f_x\vari^x$ by $\8{deg}(f):=L_+-L_-$ when
$f_{L_-}$ and $f_{L_+}$ are nonzero. Suppose
$\8{deg}(\xi_+)\le\8{deg}(\xi_-)$ and let $r_0=\xi_+$ and
$r_1=\xi_-$. We make a division with remainder and get a polynomial
$q_0$ with $\8{deg}(q_0)=\8{deg}(r_0)-\8{deg}(r_1)$ and a polynomial
$r_2$ with $\8{deg}(r_2)<\8{deg}(r_1)$ such that
\begin{equation}
r_0=q_0 r_1 + r_2.
\end{equation}
With this decomposition we get $\8{gcd}(r_0,r_1)=\8{gcd}(r_1,r_2)$.
We repeat this division recursively until the remainder vanishes:
\begin{eqnarray}
r_i&=&q_i r_{i+1}+r_{i+2}\\
r_{n+1}&=&q_{n+1}r_n.
\end{eqnarray}
Then we have $r_n=\8{gcd}(r_n,r_{n+1})=\8{gcd}(r_0,r_1)$. We rewrite
the recursion to get the form of equation (\ref{euclid}):
\[\left(\begin{array}{c}r_{i-1}\\r_i\end{array}\right)=\left(\begin{array}{cc}0&1\\1&-q_{i-2}\end{array}\right)\left(\begin{array}{c}r_{i-2}\\r_{i-1}\end{array}\right)\]
So we get
\[\left(\begin{array}{c}r_n\\0\end{array}\right)=\Gamma_n\dots\Gamma_0\left(\begin{array}{c}r_0\\r_1\end{array}\right)\]
with
\[\Gamma_i:=\left(\begin{array}{cc}0&1\\1&-q_i\end{array}\right)\;,\]
and since all entries in the matrices are polynomials we get
polynomials $f_0$ and $f_1$ such that
\[r_n=f_0 r_0+f_1 r_1\]
holds.
\end{proof}
%%%

%%%
%%%
\section{Main results}
%%%
%%%
\label{sec-3}

%%%
\subsection{Characterization of Clifford quantum cellular automata}
%%%

We have seen in Proposition~\ref{sca-fourier} that symplectic
cellular automata are nothing else but linear functions on the phase
space $\glph{s}=\fring{s}^2$ that preserve the
$\fring{s}$-symplectic form $\FMspf$. Such a map on $\fring{s}^2$
can be represented by a two-by-two matrix with entries in the
polynomial ring $\fring{s}$. The first column is given by $\1t_1=\1t
{1\choose 0}$ (``the local rule for $X$'') and the second column by
$\1t_2=\1t {0\choose 1}$ (``the local rule for $Z$''). The
commutation relations of the local rule then end up in the following
conditions on the column vectors:

%%%
\begin{cor}\label{cor-sca}
A two-by-two matrix $\1t$ with entries in $\fring{s}$ is a
symplectic cellular automaton, if and only if, the column vectors of
$\,\1t=(\1t_1,\1t_2)$ fulfill
$\fmspf{\1t_1}{\1t_1}=0=\fmspf{\1t_2}{\1t_2}$ and
$\fmspf{\1t_1}{\1t_2}=1$.
\end{cor}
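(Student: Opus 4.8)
The plan is to invoke Proposition~\ref{sca-fourier}, which tells us that $\1t$ is a symplectic cellular automaton precisely when the linear map $\1t$ on $\fring{s}^2$ leaves the form $\FMspf$ invariant, i.e.\ $\fmspf{\1t\xi}{\1t\eta}=\fmspf{\xi}{\eta}$ for all $\xi,\eta\in\fring{s}^2$. The essential observation is then that this ostensibly infinite family of equations collapses to just three, because $\FMspf$ is built bilinearly out of the coordinates — more precisely, $\fring{s}$-linearly in the second slot and semilinearly, through the involution $\radj{\cdot}$, in the first, since $\fmspf{a\xi}{\eta}=\radj{a}\fmspf{\xi}{\eta}$. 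Hence it suffices to test the identity on the standard basis $e_1={1\choose 0}$, $e_2={0\choose 1}$, whose images under $\1t$ are exactly the columns $\1t_1=\1t e_1$ and $\1t_2=\1t e_2$.

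For the necessity direction I would simply evaluate $\FMspf$ on the basis straight from its definition $\fmspf{\xi}{\eta}=\radj{\xi_+}\eta_--\radj{\xi_-}\eta_+$, obtaining $\fmspf{e_1}{e_1}=\fmspf{e_2}{e_2}=0$ and $\fmspf{e_1}{e_2}=1$. Invariance of $\FMspf$ under $\1t$ then forces $\fmspf{\1t_1}{\1t_1}=\fmspf{\1t e_1}{\1t e_1}=\fmspf{e_1}{e_1}=0$, and likewise $\fmspf{\1t_2}{\1t_2}=0$ and $\fmspf{\1t_1}{\1t_2}=1$, which are exactly the three asserted conditions.

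For the sufficiency direction I would expand $\fmspf{\1t\xi}{\1t\eta}$ for arbitrary $\xi,\eta$ using the matrix linearity $\1t\xi=\xi_+\1t_1+\xi_-\1t_2$ together with the semilinear/linear structure, producing the four terms $\radj{\xi_+}\eta_+\fmspf{\1t_1}{\1t_1}$, $\radj{\xi_+}\eta_-\fmspf{\1t_1}{\1t_2}$, $\radj{\xi_-}\eta_+\fmspf{\1t_2}{\1t_1}$ and $\radj{\xi_-}\eta_-\fmspf{\1t_2}{\1t_2}$; substituting the three hypotheses collapses this to $\radj{\xi_+}\eta_--\radj{\xi_-}\eta_+=\fmspf{\xi}{\eta}$, as required. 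The one point that needs care — and really the only nonroutine step — is the fourth, unhypothesized value $\fmspf{\1t_2}{\1t_1}$: I would derive the antisymmetry-up-to-involution identity $\fmspf{\eta}{\xi}=-\radj{\fmspf{\xi}{\eta}}$ directly from the definition, so that $\fmspf{\1t_2}{\1t_1}=-\radj{\fmspf{\1t_1}{\1t_2}}=-\radj{1}=-1$ follows from the off-diagonal hypothesis alone. Keeping the involution $\radj{\cdot}$ in the correct slot throughout is the whole subtlety; everything else is bookkeeping.
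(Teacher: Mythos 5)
Your proof is correct and is essentially the argument the paper leaves implicit: the corollary is stated there without proof as an immediate consequence of Proposition~\ref{sca-fourier}, the invariance of $\FMspf$ being reduced to the standard basis vectors ${1\choose 0}$, ${0\choose 1}$ via the $\fring{s}$-sesquilinearity of the form (semilinear through $\radj{\,\cdot\,}$ in the first slot, linear in the second). Your handling of the one nonroutine point — deriving $\fmspf{\1t_2}{\1t_1}=-\radj{\fmspf{\1t_1}{\1t_2}}=-1$ from the involutive antisymmetry $\fmspf{\eta}{\xi}=-\radj{\fmspf{\xi}{\eta}}$ — is also correct and is exactly what makes the three stated conditions sufficient.
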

%%%

%%%
\begin{remark}
\rm The column vectors $\1t_{1,2}$ of a symplectic cellular
automaton generate maximally isotropic $\fring{s}$-subspaces
$\fring{s}\1t_{1,2}$, since these are the images of the basis
vectors ${1\choose 0}$ and ${0\choose 1}$ under the invertible
symplectic transformation $\1t$. Because the basis vectors generate
by construction maximally isotropic subspaces, this must then also
be true for the images $\1t_{1,2}$.\fin
\end{remark}
%%%

In the next subsection, we shall see that the classification of
one-dimensional symplectic cellular automata is easier to handle. A
useful observation is that a $s$-dimensional symplectic cellular
automaton $\1t\in\2M_2(\fring{s})$ induces for each direction
$k=1,\dots, s$ a one-dimensional cellular automaton. To see this, we
introduce for each direction $k=1,\dots, s$  a surjective ring
homomorphism $\oned{k}{}$ which maps the polynomial ring $\fring{s}$
of $s$-variables $\Vari{1}{s}$ onto the ring $\fring{}$ of one
variable $\vari$. The ring homomorphism $\oned{k}{}$ assigns to a
polynomial $f\in\fring{s}$ the polynomial
\begin{equation}
\label{eq-red-dim}
\oned{k}{f}(\vari):=\sum_{(x^1,\dots,x^s)\in\7Z^s} f(x^1,\dots,x^s)
\ \vari^{x^k}
\end{equation}
which only depends on the variables $\vari,\vari^{-1}$. The ring
homomorphism $\oned{k}{}$ evaluates the polynomial $f\in\fring{s}$
at $\vari_l=1$, for $l\not=k$, whereas $\vari_k=\vari$ is the
remaining free variable.

For a symplectic cellular automaton $\1t\in\2M_2(\fring{s})$ the
conditions $\fmspf{\1t_{1,2}}{\1t_{1,2}}=0$ and
$\fmspf{\1t_1}{\1t_2}=1$ are identities of polynomials. The matrix
$\oned{k}{\1t}\in\2M_2(\fring{})$ is build by applying the ring
homomorphism $\oned{k}{}$ to each matrix element individually.
Obviously, the identities
$\oned{k}{\fmspf{\1t_{1,2}}{\1t_{1,2}}}=\fmspf{\oned{k}{\1t}_{1,2}}{\oned{k}{\1t}_{1,2}}=0$
as well as
$\oned{k}{\fmspf{\1t_1}{\1t_2}}=\fmspf{\oned{k}{\1t}_1}{\oned{k}{\1t}_2}=1$
follow. As a consequence we get:

%%%
\begin{cor}\label{cor-red-dim}
Let $\1t\in\2M_2(\fring{s})$ be a $s$-dimensional symplectic
cellular automaton. Then for each direction $k=1,\dots, s$, the
two-by-two matrix $\oned{k}{\1t}\in\2M_2(\fring{})$ is a
one-dimensional symplectic cellular automaton.
\end{cor}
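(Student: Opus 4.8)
The plan is to verify directly that the reduced matrix $\oned{k}{\1t}$ satisfies the three defining conditions of a one-dimensional symplectic cellular automaton, as characterized in Corollary~\ref{cor-sca}. By that corollary, a two-by-two matrix over $\fring{}$ is a one-dimensional symplectic CA if and only if its columns $\1t_1,\1t_2$ obey $\fmspf{\1t_1}{\1t_1}=0$, $\fmspf{\1t_2}{\1t_2}=0$, and $\fmspf{\1t_1}{\1t_2}=1$. So it suffices to show that applying $\oned{k}{}$ to the (already established) polynomial identities for $\1t\in\2M_2(\fring{s})$ yields exactly the corresponding identities for $\oned{k}{\1t}\in\2M_2(\fring{})$.

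The crux, which is already sketched in the text immediately preceding the corollary, is that $\oned{k}{}$ is a \emph{ring homomorphism} that intertwines the symplectic form in the two dimensions: one needs $\oned{k}{\fmspf{\1t_i}{\1t_j}}=\fmspf{\oned{k}{\1t}_i}{\oned{k}{\1t}_j}$. The first thing I would check carefully is that $\oned{k}{}$ commutes with the reflection involution, i.e. $\oned{k}{\radj f}=\radj{\oned{k}{f}}$. This holds because setting $\vari_l=1$ for $l\neq k$ is invariant under inverting those variables, while the surviving variable $\vari_k$ is inverted consistently on both sides; concretely, from the definition (\ref{eq-red-dim}), $\oned{k}{\radj f}(\vari)=\sum_x f(-x)\vari^{x^k}=\sum_y f(y)\vari^{-y^k}=\oned{k}{f}(\vari^{-1})=\radj{\oned{k}{f}}(\vari)$. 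Given this, and since $\oned{k}{}$ is multiplicative and additive, applying it to $\fmspf{\xi}{\eta}=\radj{\xi_+}\eta_--\radj{\xi_-}\eta_+$ distributes across the products and the subtraction to give precisely $\fmspf{\oned{k}{\xi}}{\oned{k}{\eta}}$.

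With the intertwining identity in hand, the conclusion is immediate. The conditions $\fmspf{\1t_{1,2}}{\1t_{1,2}}=0$ and $\fmspf{\1t_1}{\1t_2}=1$ are equalities of polynomials in $\fring{s}$; applying the ring homomorphism $\oned{k}{}$ to both sides preserves them, and the intertwining property rewrites the left-hand sides as $\fmspf{\oned{k}{\1t}_{1,2}}{\oned{k}{\1t}_{1,2}}$ and $\fmspf{\oned{k}{\1t}_1}{\oned{k}{\1t}_2}$ respectively. Note that $\oned{k}{1}=1$ and $\oned{k}{0}=0$, so the right-hand sides reduce correctly. Thus $\oned{k}{\1t}$ meets all three criteria of Corollary~\ref{cor-sca} over $\fring{}$ and is a one-dimensional symplectic cellular automaton.

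**The main obstacle** I anticipate is purely bookkeeping: confirming that $\oned{k}{}$ genuinely respects the reflection $\radj{(\cdot)}$ and the \emph{component-wise} structure on $\fring{s}^2$, so that the symplectic form really is carried to the symplectic form in one variable. Once that intertwining lemma is nailed down, the rest is a formal transport of three polynomial identities through a ring homomorphism, with no genuine analytic or combinatorial difficulty. Since the surjectivity of $\oned{k}{}$ and its homomorphism property are asserted in the preceding paragraph, I would treat them as given and concentrate the written argument on the compatibility with $\radj{(\cdot)}$ and $\FMspf$.
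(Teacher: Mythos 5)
Your proposal is correct and follows essentially the same route as the paper, which likewise proves the corollary (in the paragraph preceding its statement) by applying the ring homomorphism $\oned{k}{}$ entrywise and noting that the polynomial identities $\fmspf{\1t_{1,2}}{\1t_{1,2}}=0$ and $\fmspf{\1t_1}{\1t_2}=1$ from Corollary~\ref{cor-sca} are transported to $\oned{k}{\1t}$. The only difference is that you explicitly verify the compatibility $\oned{k}{\radj{f}}=\radj{\oned{k}{f}}$, which the paper subsumes under ``Obviously''; this is a sound and welcome elaboration rather than a different argument.
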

%%%

Now it is easy to show that symplectic cellular automata are
reflection invariant and that the determinant is a monomial. It is
slightly more involved that we have reflection invariance with
respect to a lattice point and not with respect to an half integer
lattice point.

%%%
\begin{thm}\label{thm-sca-1}
A $\fring{s}$-linear map $\1t\in\2M_2(\fring{s})$ is a symplectic
cellular automaton, if and only if, the following holds:
\begin{enumerate}
\item
The matrix $\1t$ is a reflection invariant with respect to some
lattice point $a\in\7Z^s$.
\item
The $\fring{s}$-valued determinant of $\1t$ is
$\8{det}(\1t)=\vari^{2a}$.
\end{enumerate}
\end{thm}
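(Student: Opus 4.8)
The plan is to translate the three bilinear conditions of Corollary~\ref{cor-sca} into a determinant identity and a reflection symmetry of the whole matrix, and then to pin down the parity of the exponent by reducing to one dimension. The converse direction is the easy half: if $\1t=\vari^{2a}\radj{\1t}$ entrywise and $\8{det}(\1t)=\vari^{2a}$, then every entry $e$ obeys $\radj{e}=\vari^{-2a}e$, and substituting this into the three expressions $\fmspf{\1t_1}{\1t_1}$, $\fmspf{\1t_2}{\1t_2}$, $\fmspf{\1t_1}{\1t_2}$ turns the last one into $\vari^{-2a}\,\8{det}(\1t)=1$ and the first two into $0$, which are exactly the conditions of Corollary~\ref{cor-sca}. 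So I focus on the forward direction.

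For the forward direction I first observe that $\radj{\8{det}(\1t)}\,\8{det}(\1t)=1$. This comes from the relation $\fmspf{\1t_1}{\1t_2}=1$: reflecting it and using that the reflection is a ring homomorphism commuting with $\8{det}$ gives the product of $\8{det}(\1t)$ with its reflection equal to $1$. Since the units of $\fring{s}$ are monomials, $\8{det}(\1t)=c\,\vari^{m}$ with $c^2=1$. By the remark following Corollary~\ref{cor-sca}, each column $\1t_1,\1t_2$ generates a maximally isotropic $\fring{s}$-subspace, so Lemma~\ref{lem-iso-1} tells me that each column is reflection invariant about some half-integer point and that its two components are coprime. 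Writing $\1t^{-1}$ in two ways---once by Cramer's rule as $\8{det}(\1t)^{-1}$ times the adjugate, and once from the symplectic relation, which produces the same adjugate but with every entry replaced by its reflection---and comparing entries yields $e=\8{det}(\1t)\,\radj{e}$ for every entry $e$ of $\1t$. Combined with the column reflection invariance (an entry cannot satisfy both $e=\vari^{2a}\radj{e}$ and $e=c\vari^{m}\radj{e}$ unless $c=1$ and $m=2a$), this forces $c=1$, shows both columns reflect about the same point $a$, and gives simultaneously $\8{det}(\1t)=\vari^{2a}$ and $\1t=\vari^{2a}\radj{\1t}$. At this stage reflection invariance and the monic monomial form of the determinant are established.

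The genuinely delicate point, flagged in the text as ``slightly more involved'', is that $2a$ is \emph{even}, i.e.\ that $a$ is a lattice point and not merely a half-integer point. I reduce this to one dimension: by Corollary~\ref{cor-red-dim} each $\oned{k}{\1t}$ is a one-dimensional symplectic cellular automaton with determinant $\oned{k}{\vari^{2a}}=\vari^{2a_k}$, so it suffices to prove that the determinant exponent of a one-dimensional symplectic cellular automaton is even. When $p\neq2$ I evaluate at $\vari=-1$: because $(-1)^{-1}=-1$, the reflection acts as the identity after this substitution, so $\fmspf{\1t_1}{\1t_2}=1$ collapses to $\8{det}(\1t)|_{\vari=-1}=1$; but $\8{det}(\1t)|_{\vari=-1}=(-1)^{2a}$, and since $-1\neq1$ in $\field$ this forces $2a$ even. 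When $p=2$ this substitution degenerates, so I argue by divisibility instead: over $\field$ with $p=2$, a polynomial symmetric about a half-integer has its nonzero coefficients paired by $j\mapsto 2a-j$ with no fixed point, hence its coefficient sum vanishes and $1+\vari$ divides it. If $2a$ were odd, both components of the reflection-invariant column $\1t_1$ would then be divisible by $1+\vari$, contradicting their coprimality from Lemma~\ref{lem-iso-1}. Thus $2a_k$ is even for every $k$, so $a\in\7Z^s$, and both assertions of the theorem follow. The main obstacle is exactly this parity step, and in particular the case $p=2$, where the clean evaluation at $\vari=-1$ is unavailable and must be replaced by the divisibility-by-$(1+\vari)$ argument.
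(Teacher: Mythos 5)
Your proof is correct, and it follows the paper's overall skeleton---easy converse by substitution, forward direction via maximal isotropy of the columns and Lemma~\ref{lem-iso-1}, reduction to one dimension by Corollary~\ref{cor-red-dim}, and a parity argument exploiting coprimality of the column components---but two of your sub-arguments are genuinely different from the paper's. First, where the paper derives $a=b$ and $\8{det}(\1t)=\vari^{2a}$ by direct computation with the form, namely $1=\fmspf{\1t_1}{\1t_2}=\vari^{2(b-a)}$ and then $1=\fmspf{\vari^{2a}\radj{\1t_1}}{\1t_2}=\vari^{-2a}\det(\1t)$, you encode the symplectic conditions as the matrix identity $\radj{\1t}^{T}J\1t=J$ with $J=\left(\begin{smallmatrix}0&1\\-1&0\end{smallmatrix}\right)$, take determinants to get $\radj{d}\,d=1$ for $d=\8{det}(\1t)$, and compare Cramer's rule with the symplectic inverse to obtain the entrywise relation $e=d\,\radj{e}$; this is equivalent and yields the whole-matrix reflection invariance in one stroke (note only that pinning down $c=1$ and $m=2a$ from the two relations on an entry requires that entry to be nonzero, which is guaranteed since each column has coprime, hence not both vanishing, components). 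Second, for the parity of $2a$---the step the paper itself flags as ``slightly more involved''---the paper expands a polynomial symmetric about a half-integer as $\sum_n c_n(\vari^{n+1}+\vari^{-n})$ and observes that each term is divisible by $\vari+1$, an argument uniform in $p$; you instead split cases, evaluating $\fmspf{\1t_1}{\1t_2}=1$ at $\vari=-1$ for $p\neq 2$ (a one-line argument: the reflection becomes trivial under this substitution, so $(-1)^{2a}=1$ in $\field$ forces $2a$ even) and falling back for $p=2$ on the fixed-point-free coefficient pairing $j\mapsto 2a-j$, which is in substance the paper's divisibility-by-$(\vari+1)$ fact specialised to $\Field{2}$. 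The paper's route buys uniformity in the cell dimension $p$; yours buys a very short argument in the generic case $p\neq 2$, at the price of a case distinction that, as you correctly note, cannot be avoided by your evaluation trick when $p=2$.
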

%%%
\begin{proof}
If $\1t$ is a symplectic cellular automaton, then the column vectors
$\1t_{1,2}$ generate maximally isotropic subspaces. By
Lemma~\ref{lem-iso-1} it follows that $\1t_1$, respectively $\1t_2$,
is reflection invariant to some half integer lattice point $a$,
respectively $b$. Since $\1t$ preserves the symplectic form $\FMspf$
we obtain
$1=\fmspf{\1t_1}{\1t_2}=\fmspf{\vari^{2a}\radj{{\1t}_1}}{\vari^{2b}\radj{{\1t}_2}}=\vari^{2(b-a)}
\fmspf{\radj{{\1t}_1}}{\radj{{\1t}_2}}=\vari^{2(b-a)}$ and therefore
$a=b$ for an half-integer lattice point $a\in\frac{1}{2}\7Z^s$. As a
consequence, $\1t$ is reflection invariant for
$a\in\frac{1}{2}\7Z^s$. Now,
$1=\fmspf{\1t_1}{\1t_2}=\fmspf{\vari^{2a}\radj{{\1t}_1}}{\1t_2}=\vari^{-2a}\det(\1t_1,\1t_2)$.

Vice versa, let $\1t$ be a matrix, which is invariant with respect
to the reflection at $a$ and whose determinant is
$\8{det}(\1t)=\vari^{2a}$. Then the column vectors $\1t_{1,2}$ are
reflection invariant, which implies (by Lemma~\ref{lem-iso-1}) that
$\fmspf{\1t_{1,2}}{\1t_{1,2}}=0$ holds. The determinant of $\1t$ is
$\vari^{2a}$ which implies $\fmspf{\1t_1}{\1t_2}=1$. Thus $\1t$
preserves the symplectic form $\FMspf$.

By Corollary~\ref{cor-red-dim}, we obtain a one-dimensional
symplectic cellular automaton $\oned{k}{\1t}$ for each lattice
direction $k=1,\dots, s$. We have already shown that the column
vectors $\1t_{1,2}$ are reflection invariant for
$a=(a^1,\dots,a^s)$, which implies that for each direction $k$ the
column vectors $\oned{k}{\1t}_{1,2}$ are reflection invariant for
$a^k$. We also have that $\oned{k}{\1t}_{1,2}$ generate maximally
isotropic $\fring{}$-subspaces, since these define valid cellular
automaton rules.

Suppose now, that $f\in\fring{}$ is reflection invariant for
$b\in\frac{1}{2}\7Z$ in the half-integer lattice. Then we can
translate $f$ by an even translation $2y\in 2\7Z$, such that
$c=2(b+y)$ is either $0$ or $1$. If $f$ is of even length, then
$c=1$ follows. The polynomial $\vari^yf$ is reflection invariant for
$1/2$ and can be expanded as
\begin{equation}
\vari^yf=\sum_{n\in\7N} c_n \ (\vari^{n+1}+\vari^{-n})  \; .
\end{equation}
Now, for each $n\in\7N$, the polynomial $\vari^{n+1}+\vari^{-n}$ is
a multiple of $u+1$. Thus  $f$ is also a multiple\footnote{Note that
the coefficients are from the finite field $\Field{p}$.} of
$\vari^{-y}(\vari+1)$. From this we conclude that, if $b$ is not an
integer, then a reflection invariant $\xi\in\fring{}^2$ is a
multiple of $\vari^{-y}(\vari+1)$ and does not generate a maximally
isotropic $\fring{}$-subspace, since $\vari+1$ is a nontrivial
common divisor of $\xi_+$ and $\xi_-$, which is a contradiction. So
$a^k$ must be an integer lattice point, that is,
$a=(a^1,\dots,a^s)\in\7Z^s$.
\end{proof}
%%%

So each symplectic cellular automaton $\1t$ is reflection invariant
for the reflection at some lattice point $a\in\7Z^s$. Therefore, the
symplectic cellular automaton $\vari^{-a}\1t$ is reflection
invariant with respect to the origin $x=0$. In the subsequent, we
call all symplectic cellular automata, which are reflection
invariant with respect to the origin, to be ``centered'' and it is
sufficient to classify only those. The polynomials in $\fring{s}$
which are reflection invariant with respect to the origin form a
subring $\subfring{s}\subset\fring{s}$ and will be simply called
reflection invariant (for $s=1$ we will again omit the index). From
Theorem~\ref{thm-sca-1}, we obtain a handy characterization of
centered symplectic cellular automata:

%%%
\begin{cor}
\label{cor-sca-1} The group of centered symplectic cellular automata
is given by the group $\SL{2}{\subfring{s}}$ of two-by-two matrices
$\1t$ with entries in the subring $\subfring{s}$ of reflection
invariant polynomials and $\subfring{s}$-valued determinant
$\8{det}(\1t)=1$.
\end{cor}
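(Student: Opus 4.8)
The plan is to read the corollary directly off Theorem~\ref{thm-sca-1} by specializing its distinguished lattice point $a$ to the origin, and then to record that the resulting set carries the group structure of $\SL{2}{\subfring{s}}$.

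First I would unwind the two defining properties of a \emph{centered} symplectic cellular automaton. By definition a centered SCA is a symplectic cellular automaton $\1t\in\2M_2(\fring{s})$ which is reflection invariant with respect to the origin, i.e. the lattice point of Theorem~\ref{thm-sca-1} is $a=0$. Reflection invariance to the origin means $\1t=\radj{\1t}$ entrywise, which is precisely the statement that every matrix element is fixed by $f\mapsto\radj f$, hence lies in the subring $\subfring{s}$ of reflection invariant polynomials; thus $\1t\in\2M_2(\subfring{s})$. Feeding $a=0$ into the determinant clause of Theorem~\ref{thm-sca-1} gives $\det(\1t)=\vari^{2a}=\vari^0=1$. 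Therefore every centered SCA lies in $\SL{2}{\subfring{s}}$.

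For the converse I would run the same two observations backwards. Given $\1t\in\SL{2}{\subfring{s}}$, each entry is reflection invariant to the origin, so $\1t=\radj{\1t}$ and $\1t$ is reflection invariant with respect to the lattice point $a=0$; moreover $\det(\1t)=1=\vari^{2\cdot 0}$. Both hypotheses of Theorem~\ref{thm-sca-1} are then met, so $\1t$ is a symplectic cellular automaton, and being reflection invariant to the origin it is centered. This establishes the set-theoretic equality of the centered SCA with $\SL{2}{\subfring{s}}$.

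Finally I would verify that this common set is genuinely a group. Since $f\mapsto\radj f$ is a ring homomorphism on $\fring{s}$, its fixed points $\subfring{s}$ form a commutative subring, closed under sums and products; consequently the $2\times 2$ matrices over $\subfring{s}$ with determinant $1$ are closed under multiplication (the product again has determinant one and entries in $\subfring{s}$, using that reflection commutes with the matrix product because the ring is commutative) and under inversion (for a determinant-one matrix the inverse is the adjugate, whose entries again lie in $\subfring{s}$). Equivalently, at the level of automata, composition and inversion of symplectic CA preserve both invariance of $\FMspf$ and reflection invariance to the origin, so the centered SCA form a group, identified by the first two paragraphs with $\SL{2}{\subfring{s}}$. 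No step here presents a real obstacle, since the whole content is already packaged in Theorem~\ref{thm-sca-1}; the only points requiring a moment's care are the translation between ``reflection invariant to the origin'' and ``entries in $\subfring{s}$,'' and the check that inversion stays inside the subring, which the adjugate formula settles at once.
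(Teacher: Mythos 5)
Your proposal is correct and follows essentially the same route as the paper: the corollary is obtained by specializing the lattice point $a$ of Theorem~\ref{thm-sca-1} to the origin (so reflection invariance becomes membership of the entries in $\subfring{s}$ and the determinant condition becomes $\det(\1t)=\vari^{0}=1$), with the group structure being the same routine check the paper disposes of via determinant multiplicativity and the Cramer's-rule remark for the inverse. Nothing in your argument deviates from or adds a gap to the paper's (implicit) proof.
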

%%%

%%%
\begin{example}\rm
The symplectic transformation corresponding to the ``cluster state
QCA'' (eq.~\ref{example-0}) is given by
\begin{equation}
\1t=\left(\begin{array}{cc}0&1\\1&\vari+\vari^{-1}\end{array}\right)\;.
\end{equation}
Obviously all entries are reflection invariant with respect to the
origin and the determinant is equal to one (modulo
2).\fin\end{example}
%%%

%%%
\begin{remark}\rm
A nice aspect of Corollary~\ref{cor-sca-1} is that the centered
symplectic cellular automata can be obtained by the following
strategy: Choose two arbitrary reflection invariant
$f,h\in\subfring{s}$ and find all possible factorizations of the
polynomial $fh-1=f'h'$ into a product of two reflection invariant
$f',h'\in\subfring{s}$. The corresponding symplectic cellular
automaton is then given by
\begin{equation}
\left(\begin{array}{cc}f&f'\\h'&h\end{array}\right)\in\SL{2}{\subfring{s}}
\; .
\end{equation}
Even if the task of factorizing the polynomial $fh-1$ is quite
cumbersome, there is always a ``trivial'' solution, namely, $h'=1$
and $f'=fh-1$. The matrix
\begin{equation}
\label{constr-1}
\left(\begin{array}{cc}f&fh-1\\1&h\end{array}\right)\in\SL{2}{\subfring{s}}
\end{equation}
describes the corresponding symplectic cellular automaton.
\fin
\end{remark}
%%%

%%%
\begin{remark}\rm
Another remarkable fact is that, due to Cramer's rule, the inverse
of a centered symplectic CA $\1t$ is simply given by
\begin{equation}
\1t^{-1}=\left(\begin{array}{rr}\1t_{22}&-\1t_{12}\\-\1t_{21}&\1t_{11}\end{array}\right)\;.
\end{equation}
Similarly we have that for a symplectic CA $\1t$ containing a
translation by $a$ positions, e.g $\det(\1t)=\vari^{a}$, the inverse
contains a translation by $-a$ positions.\fin
\end{remark}
%%%

%%%
\subsection{1D CQCAs and translationally invariant stabilizer states}
%%%

In this subsection we are investigating one-dimensional symplectic
cellular automata. As already mentioned, we can achieve more results
in this case, because we can apply the euclidean algorithm
(Lemma~\ref{lem-euclid}). We will use the euclidean algorithm to
show that for every reflection invariant $\xi\in\fring{}^2$ with
$\8{gcd}(\xi_+,\xi_-)=1$, there exists at least one corresponding
reflection invariant $\eta$ such that $\fmspf{\eta}{\xi}=1$ holds
and is therefore a valid column of a symplectic cellular automaton
matrix. We will use this fact to show that every uniquely determined
and translationally invariant stabilizer state can be prepared from
a product state by applying one timestep of a Clifford QCA.

Stabilizer states are studied extensively in the last years
(\cite{Got97} and \cite{NieChu00} are just examples, which are
useful as introductory texts). The basic concept is to fix an
abelian group of operators (usually a subgroup of the Pauli group),
also called stabilizer group, and to define a stabilizer state as
common eigenvector of all these operators. In our case we are
looking for translationally invariant states, so the stabilizer
group is generated by all translates of one single Weyl operator
$\weyl{\xi}$ for some phase space vector $\xi\in\fring{}^2$. The
state should fulfill
$\omega(\Tra{x}\weyl{\xi})=\omega(\weyl{\tra{x}\xi})=1$ for all
$x\in\7Z$. The stabilizer formalism is often studied for finitely
many qudits. In that case it is known that the stabilizer state is
uniquely determined, if the number of generating operators is large
enough (see e.g. \cite{NieChu00} for a quantitative statement). In
our situation we have infinitely many qudits, so we cannot apply
this result. But it turns out that the operators $\weyl{\tra{x}\xi}$
must generate a maximal abelian algebra, or equivalently, the
subspace $\fring{}\xi$ must be maximally isotropic.

%%%
\begin{thm}\label{structure-thm-1}
For a phase space vector $\xi\in\fring{}^2$ the following is
equivalent:
\begin{enumerate}
\item\label{equiv-1}
There exists a uniquely determined state $\omega$ with
$\omega(\weyl{\tra{x}\xi})=1$ for all $x\in\7Z$.
\item\label{equiv-2}
$\fring{}\xi$ is a maximally isotropic $\fring{}$-subspace.
\item\label{equiv-3}
There is a Clifford QCA $T$ with $\weyl{\xi}=T(\weyl{0,1})$.
\item\label{equiv-4}
$\xi$ is a reflection invariant and $\8{gcd}(\xi_+,\xi_-)=1$.
\end{enumerate}
\end{thm}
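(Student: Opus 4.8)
The plan is to close the cycle $(\ref{equiv-2})\Rightarrow(\ref{equiv-4})\Rightarrow(\ref{equiv-3})\Rightarrow(\ref{equiv-1})\Rightarrow(\ref{equiv-2})$, which yields all four equivalences at once. The implication $(\ref{equiv-2})\Rightarrow(\ref{equiv-4})$ is nothing but Lemma~\ref{lem-iso-1}: parts~1 and~2 say exactly that a singly generated maximally isotropic subspace $\fring{}\xi$ has coprime components and is reflection invariant to some half-integer point. (The converse is also visible directly: reflection invariance of $\xi$ gives $\fmspf{\xi}{\eta}=\vari^{-2a}(\xi_+\eta_--\xi_-\eta_+)$, so $\fmspf{\xi}{\eta}=0$ forces $\xi_+\eta_-=\xi_-\eta_+$, and then $\8{gcd}(\xi_+,\xi_-)=1$ together with the fact that $\fring{}$ is an integral domain forces $\eta\in\fring{}\xi$; but I route this through the cycle instead.)

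I expect $(\ref{equiv-4})\Rightarrow(\ref{equiv-3})$ to be the main obstacle. Given reflection invariant $\xi$ with $\8{gcd}(\xi_+,\xi_-)=1$, I must build a Clifford QCA with $T(\weyl{0,1})=\weyl{\xi}$; by Corollary~\ref{cor-sca} and Theorem~\ref{thm-1} it suffices to find a reflection invariant $\eta$ with $\fmspf{\eta}{\xi}=1$, since then $\1t=(\eta,\xi)$ is a symplectic cellular automaton and the free phase $\qcaph{0,1}$ can be set to $1$. First, coprimality excludes a genuinely half-integer reflection point: by the computation in the proof of Theorem~\ref{thm-sca-1}, invariance to a non-integer point would make $\vari+1$ a common divisor of $\xi_+$ and $\xi_-$. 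So $\xi$ is reflection invariant to a lattice point $a\in\7Z$, and after multiplying by $\vari^{-a}$ (equivalently, composing the final QCA with the translation $\Tra{a}$, which is itself a Clifford QCA) I may take $\xi$ centered, so $\xi_+,\xi_-\in\subfring{}$. The key point is that $\subfring{}=\field[\vari+\vari^{-1}]$ is a polynomial ring in one variable, hence a principal ideal ring, and that coprimality in $\fring{}$ descends to $\subfring{}$ (a common $\subfring{}$-divisor is a common $\fring{}$-divisor, hence a monomial, hence a constant because it is symmetric). Applying the Euclidean algorithm of Lemma~\ref{lem-euclid} inside the symmetric subring $\subfring{}$ gives $f_0,f_1\in\subfring{}$ with $f_0\xi_++f_1\xi_-=1$; setting $\eta_+=f_1$, $\eta_-=-f_0$ produces a centered $\eta$ with $\fmspf{\eta}{\xi}=\eta_+\xi_--\eta_-\xi_+=1$. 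Then $\1t=(\eta,\xi)$ has $\subfring{}$-entries and $\8{det}(\1t)=1$, so $\1t\in\SL{2}{\subfring{}}$ is a centered symplectic cellular automaton by Corollary~\ref{cor-sca-1}, and induces the required CQCA. The delicate feature, and the reason a naive symmetrization of an arbitrary Euclidean solution would break at $p=2$ (no division by $2$), is precisely the reflection invariance of the symplectic partner; working throughout in $\subfring{}$ resolves this uniformly in $p$.

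For $(\ref{equiv-3})\Rightarrow(\ref{equiv-1})$ I transport the trivial stabilizer state through $T$. The translates of $\weyl{0,1}=Z$ generate a maximal abelian algebra, and the conditions $\omega_0(\weyl{\tra{x}(0,1)})=1$ single out the unique ``all spins up'' product state $\omega_0$. Given a CQCA with $T(\weyl{0,1})=\weyl{\xi}$, translation invariance gives $T(\weyl{\tra{x}(0,1)})=\Tra{x}T(\weyl{0,1})=\weyl{\tra{x}\xi}$, so $\omega:=\omega_0\circ T^{-1}$ satisfies $\omega(\weyl{\tra{x}\xi})=\omega_0(\weyl{\tra{x}(0,1)})=1$. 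Uniqueness transfers as well: if $\omega'(\weyl{\tra{x}\xi})=1$ for all $x$, then $\omega'\circ T$ equals $1$ on every $\weyl{\tra{x}(0,1)}$, whence $\omega'\circ T=\omega_0$ and $\omega'=\omega$.

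Finally $(\ref{equiv-1})\Rightarrow(\ref{equiv-2})$. Existence of a state with $\omega(\weyl{\tra{x}\xi})=1$ forces the $\weyl{\tra{x}\xi}$ to commute pairwise: in the GNS representation each fixes the cyclic vector, so a nontrivial commutator phase $\uroot{p}^{\spf{\tra{x}\xi}{\tra{y}\xi}}$ would be contradictory. Via the Fourier identity $\widehat{\mspf{\xi}{\xi}}=\fmspf{\xi}{\xi}$ from the proof of Proposition~\ref{sca-fourier}, this says exactly that $\fring{}\xi$ is isotropic. If it were not maximally isotropic there would be $\eta\notin\fring{}\xi$ with $\fmspf{\xi}{\eta}=0$, so $\weyl{\tra{x}\eta}$ commutes with the whole stabilizer without being generated by it; the generated abelian algebra would then fail to be maximal abelian, and one could impose distinct admissible expectations for $\weyl{\eta}$ to obtain two different states agreeing on all $\weyl{\tra{x}\xi}$, contradicting uniqueness. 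Hence $\fring{}\xi$ is maximally isotropic, closing the cycle. This last step is the one place where the infinite-dimensional stabilizer argument is genuinely used, and it is the secondary difficulty after the algebraic construction of $\eta$.
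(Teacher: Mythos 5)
Your cycle $(\ref{equiv-2})\Rightarrow(\ref{equiv-4})\Rightarrow(\ref{equiv-3})\Rightarrow(\ref{equiv-1})\Rightarrow(\ref{equiv-2})$ is exactly the paper's, and two legs coincide with the paper's proofs ($(\ref{equiv-2})\Rightarrow(\ref{equiv-4})$ via Lemma~\ref{lem-iso-1}, and $(\ref{equiv-3})\Rightarrow(\ref{equiv-1})$ by transporting the all-spins-up product state through $T$). Your $(\ref{equiv-4})\Rightarrow(\ref{equiv-3})$ is genuinely different and correct: the paper runs the extended Euclidean algorithm (Lemma~\ref{lem-euclid}) in the full Laurent ring $\fring{}$ to get some $\eta'$ with $\fmspf{\eta'}{\xi}=1$ and then repairs reflection invariance by adding $f\xi$, with $f$ read off from the antisymmetric expansion of $\fmspf{\eta'}{\eta'}$; you instead first reduce to a centered $\xi$ (correctly supplying, via the $\vari+1$ divisibility argument from Theorem~\ref{thm-sca-1}, the step that coprimality excludes genuinely half-integer reflection points, which the paper leaves implicit), and then do Bezout inside the symmetric subring $\subfring{}\cong\field[\vari+\vari^{-1}]$, a polynomial ring over a field and hence a principal ideal domain; coprimality descends because a symmetric common divisor is a reflection-fixed monomial, hence a constant. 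This hands you a symplectic partner that is reflection invariant from the start, uniformly in $p$, and is arguably tidier than the paper's correction step.

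The gap is in $(\ref{equiv-1})\Rightarrow(\ref{equiv-2})$. Your derivation of isotropy from the GNS vector is fine (it even makes explicit something the paper assumes silently). But maximality rests on the single sentence that one ``could impose distinct admissible expectations for $\weyl{\eta}$'' --- and that assertion \emph{is} the content of the implication, not a consequence of the generated algebra failing to be maximal abelian. In the infinite quasilocal algebra $\Alg$ you must actually exhibit a second state assigning $1$ to every $\weyl{\tra{x}\xi}$ and a different value to $\weyl{\eta}$. That requires an argument, for instance: note that the $C^\ast$-algebra $\2B$ generated by $\{\weyl{\tra{x}\xi}\}\cup\{\weyl{\eta}\}$ is abelian (using $\fmspf{\xi}{\eta}=0$); extract from $\omega|_{\2B}$ a character with value $1$ on the stabilizer (possible since $|\chi(W)|\le1$ for unitary $W$ forces $\chi(W)=1$ almost everywhere in the decomposition); twist it by a nontrivial group character of $\fring{}\xi+\field\eta$ that is trivial on $\fring{}\xi$, checking compatibility with the Weyl cocycle; and extend both characters to states of $\Alg$ by the standard extension theorem for states on unital $C^\ast$-subalgebras. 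Note also that ``admissible'' is delicate at $p=2$: $\weyl{\eta}^2$ may equal $-\I$, so expectation $1$ for $\weyl{\eta}$ itself need not be achievable --- one only gets two distinct admissible phases. The paper does this work concretely instead: it invokes Lemma~\ref{iso-inbedding} to write $\xi=f\xi'$ with $\fring{}\xi'$ maximally isotropic and $f$ non-invertible, builds a CQCA mapping $\weyl{0,1}$ to $\weyl{\xi'}$, and then displays an explicit multi-parameter family of product states pulled back through $T^{-1}$, whose free parameters $a_{L_-},\dots,a_{L_+-1}$ witness non-uniqueness. Either that construction or the abstract character-extension argument must be supplied; as written, your step asserts the conclusion rather than proving it.
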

%%%
\begin{proof}
\ref{equiv-2}. $\Longrightarrow$ \ref{equiv-4}. Because
$\fring{}\xi$ is a maximally isotropic subspace we conclude from
Lemma~\ref{lem-iso-1} that $\xi$ is reflection invariant with
$\8{gcd}(\xi_+,\xi_-)=1$.

\ref{equiv-4}. $\Longrightarrow$ \ref{equiv-3}. We have to find
$\eta\in\fring{}^2$ with $\fmspf{\eta}{\xi}=1$ and
$\fmspf{\eta}{\eta}=0$. With Lemma \ref{lem-euclid} we find a
solution $f_{\pm}\in\fring{}$ of the equation
$f_+\xi_++f_-\xi_-=\8{gcd}(\xi_+,\xi_-)=1$ and
$\eta':=(\radj{f}_-,-\radj{f}_+)$ is a solution of
$\fmspf{\eta'}{\xi}=1$. Yet we do not know, whether $\eta'$ is
reflection invariant, or equivalently, whether
$\fmspf{\eta'}{\eta'}=0$ holds. But if $\eta'$ is a solution of
$\fmspf{\eta'}{\xi}=1$ then the same is true for $\eta=\eta'+f\xi$.
Thus we have to solve the condition $0=\fmspf{\eta}{\eta}=
\fmspf{\eta'}{\eta'}+\fmspf{f\xi}{\eta'}+\fmspf{\eta'}{f\xi}=\fmspf{\eta'}{\eta'}-\radj{f}+f$.
The polynomial $h:=\fmspf{\eta'}{\eta'}$ is anti-symmetric with
respect to the reflection $\xi\mapsto\radj{\xi}$ and it can be
expanded as $h=\sum_{n>0} h_n(\vari^n-\vari^{-n})$. By choosing
$f=\sum_{n>0} h_n \vari^n$ we find that $\eta=\eta'+f\xi$ is indeed
reflection invariant. The matrix $(\xi,\eta)\in\SL{2}{\subfring{}}$
is then a symplectic cellular automaton and induces a Clifford QCA
with the desired property.

\ref{equiv-3}. $\Longrightarrow$ \ref{equiv-1}. Consider a state
$\omega$ with the desired property. Then this state is equal to
$\tilde{\omega}\circ T$, whereby $\tilde{\omega}$ is a state with
$\tilde{\omega}(\Tra{x}\weyl{0,1})=1$ for all $x\in\7Z$, so the
stabilizer group of this state is given by all translates of
$\weyl{0,1}$. This means $\tilde{\omega}$ is a translationally
invariant product state, which is determined by the equation
$\tilde{\omega}(\weyl{0,1})=1$ and corresponds to the one
dimensional projector onto the eigenspace of $\weyl{0,1}$ with
eigenvalue $1$. Therefore this state is uniquely determined and
$\omega=\tilde{\omega}\circ T$ is the unique state with
$\omega(\weyl{\tra{x}\xi})=1$.

\ref{equiv-1}. $\Longrightarrow$ \ref{equiv-2}. Suppose
$\fring{}\xi$ is an isotropic $\fring{}$-subspace but not maximally
isotropic. By Lemma \ref{iso-inbedding}, we know that there exists a
phase space vector $\xi'$ with $\fring{}\xi\subsetneqq\fring{}\xi'$.
So we have $\xi=f\xi'$ with $f$ not invertible and Lemma
\ref{lem-iso-1} tells us that $\xi'$ is reflection invariant. With
help of the euclidean algorithm we find a QCA $T$ and a
corresponding symplectic transformation $\1t$ with
$T(\weyl{0,1})=\weyl{\1t(0,1)}=\weyl{\xi'}$ (just as step two of
this proof). Now consider a product state $\varphi$ with
$\varphi(\weyl{0,\vari^x})=\exp(\frac{2\pi i}{p}a_x)$ depending on
the $a_x$. We transform this state with $T^{-1}$ and the expectation
values of the operators $\tra{x}\weyl{\xi}$ should be all equal to
1:
\begin{eqnarray*}
1&=&\varphi_{T^{-1}}(\tra{x}\weyl{\xi})=\varphi(\tra{x}\weyl{\1t^{-1}\xi})\\
&=&\varphi(\tra{x}\weyl{\1t^{-1}(f\xi')})=\varphi(\tra{x}\weyl{\hat{f}\convol\widehat{
(0,1)}})\\
&=&\varphi\Big(\weyl{\sum_k
\hat{f}_{-k}\delta_{k+x}(0,1)}\Big)=\prod_k
\hat{f}_{-k}\varphi(\weyl{\delta_{k+x}(0,1)})\\
&=&\exp\Big(\frac{2\pi i}{p}\sum_k \hat{f}_{-k}a_{k+x}\Big)
\end{eqnarray*}
So we have to solve the equations $\sum_k \hat{f}_{-k}a_{k+x}=0$ to
get appropriate $a_x$ and therefore states with the desired
property. Since $f$ is not invertible the support of $\hat{f}$ is
not a one-elementary set. Let $I=\{-L_-,\dots,-L_+\}$ be the minimal
interval such that $\supp{\hat{f}}\subset I$. We can choose
arbitrary $a_{L_-},\dots,a_{L_+-1}$ to compute $a_{L_+}$ from the
equation $\sum_k \hat{f}_{-k}a_k$. Recursively all $a_x$ can be
calculated from the other equations but the solution will depend
from the initial choice of the $a_{L_-},\dots,a_{L_+-1}$. This means
that there exists more than one state $\varphi$ of the above form,
such that $1=\varphi_{T^{-1}}(\tra{x}\weyl{\xi})$ is fulfilled. So
the uniqueness of the state in \ref{equiv-1}. forces $\fring{}\xi$
to be maximally isotropic.
\end{proof}
%%%

So we have shown that every translationally invariant and uniquely
determined stabilizer state in a one-dimensional lattice can be
prepared out of a product state by a single timestep of a Clifford
QCA. Unfortunately we cannot generalize this result to higher
lattice dimensions, because Lemma \ref{lem-euclid} is only valid for
univariate polynomials. The euclidean algorithm for computing the
greatest common divisor can be generalized to multivariate
polynomials \cite{Bro71}, but the extended version (equation
(\ref{euclid})) does not hold.

%%%
\begin{example}
\rm We consider again the phase space vectors
$\xi_1=(1+\vari){0\choose 1}$ and $\xi_2={1\choose
\vari+\vari^{-1}}$ (see Example~\ref{stand-ex}). As already
mentioned, the phase space vector $\xi_1$ is reflection invariant
for $1/2$ and generates an isotropic $\fring{}$-subspace, but none
of the statements of Theorem~\ref{structure-thm-1} holds: The
expectation value of $\weyl{\tra{x}\xi_1}=Z_x\otimes Z_{x+1}$ is
equal to one both in the ``all spins up'' and in the ``all spins
down'' state, so there is no uniquely determined stabilizer state.
As we have seen in~\ref{stand-ex} the subspace $\fring{}\xi_1$ is
not maximally isotropic. The reflection invariance does not hold for
an integer lattice point, so $\xi_1$ is not a valid column of a
symplectic CA, and $1+\vari$ is a common divisor of $\xi_{1,+}$ and
$\xi_{1,-}$, which is not invertible.

In contrast $\xi_2$ fulfills all four conditions. The uniquely
determined stabilizer state is given by the one-dimensional cluster
state and a possible CQCA is given by example~\ref{example-0}.
\fin
\end{example}
%%%

%%%
\subsection{Factorization of 1D Clifford QCAs}
%%%
We have seen that the set of centered CQCAs form a group and that
this group is given by $2\times 2$-matrices with determinant one and
reflection invariant polynomials as matrix elements. In the
one-dimensional case the group structure can be more clarified,
since we are able to give a complete set of generators, which can be
regarded as elementary operations.

A simple example of a $2\times 2$-matrix in $\SL{2}{\subfring{s}}$
is for some reflection invariant polynomial $f\in\subfring{}$ given
by
\begin{equation}
\gen{f}:=\shear{f} \; ,
\end{equation}
which we will call ``shear transformation''. In particular,
$\gen{f_1+f_2}=\gen{f_1}\gen{f_2}$ holds for all
$f_1,f_2\in\subfring{}$\footnote{This means, the map
$f\mapsto\gen{f}$ is a group homomorphism from the additive group
$\subfring{}$ into the group of centered symplectic cellular
automata $\SL{2}{\subfring{}}$.}. The symmetric polynomials
$\symvari{n}=\vari^n+\vari^{-n}$, $n\in\7N$, and $\symvari{0}=1$
form a basis of the subring $\subfring{}$. Thus every shear
transformation can be decomposed into a finite product of elementary
shear transformations $\gen{c\symvari{n}}$ with $n\in\7N\cup\{0\}$
and $c\in\field$.

The local rule of the corresponding QCA $G_n$ with
$G_n\weyl{\xi}=\weyl{\gen{\symvari{n}}\xi}$ is for $n\ge 1$ given by
\begin{equation}
\begin{array}{lclcr}
G_n(X_0)&=&Z_{-n}\otimes\I\otimes\cdots\otimes
&X_0&\otimes\I\otimes\cdots\otimes Z_n\\
G_n(Z_0)&=& &Z_0&\end{array}\;.
\end{equation}
For $n=0$ we have the single cell operation (``local shear
transformation'')
\begin{equation}
\begin{array}{lcc}
G_0(X_0)&=&\weyl{1,1}\\
G_0(Z_0)&=&Z_0
\end{array}\;,
\end{equation}
which correspond for $p=2$ to applying the phase gate
$\left(\begin{array}{cc}1&\\&i\end{array}\right)$ to all single
cells.

Another single cell operation is the ``local Fourier
transformation'', which is in phase space given by the matrix
\begin{equation}
\genp{c}=\ftra{c}
\end{equation}
with some constant $0\ne c\in\field$ (for $c=1$ we will write
$\genf:=\genp{1}$). For $p=2$ we have $c=1$ and the corresponding
QCA switches the operators $X$ and $Z$ in each single cell and is
therefore given by applying the Hadamard matrix.

Note that all symplectic single cell transformations can be obtained
by a product of local shear and local Fourier transformations, which
is a generalization to higher cell dimensions of the fact that local
Clifford operations are generated by Hadamard and phase gate.

The symplectic transformations $\gen{f},\genp{c}$ are elementary
symplectic cellular automata in the sense of the following theorem.
The proof, which is technically slightly more involved, is given in
the Appendix~\ref{app-3}.

%%%
\begin{thm}
\label{structure-thm-2} The group of centered symplectic cellular
automata $\SL{2}{\subfring{}}$ is generated by the set
$\{\gen{\symvari{n}}|n\in\7N\cup\{0\}\}\cup\{\genp{c}|c\in\field\}$.
\end{thm}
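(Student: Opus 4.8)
The plan is to reduce an arbitrary centered symplectic cellular automaton $\1t\in\SL{2}{\subfring{}}$ (Corollary~\ref{cor-sca-1}) to the identity by a euclidean-style sequence of row operations implemented by the proposed generators. The structural fact that makes this work is that the reflection invariant subring is $\subfring{}=\field[\vari+\vari^{-1}]$, a univariate polynomial ring over the field $\field$ and hence euclidean, so the euclidean algorithm can be carried out entirely inside $\subfring{}$ (cf.\ Lemma~\ref{lem-euclid}), every quotient again being reflection invariant. Before starting I record two preliminary reductions. Since $\gen{f_1+f_2}=\gen{f_1}\gen{f_2}$ and $\{\symvari{n}\}_{n\in\7N\cup\{0\}}$ is a $\field$-basis of $\subfring{}$ with $\field=\Field{p}$ a prime field, the elementary shears $\gen{\symvari{n}}$ already generate every shear $\gen{f}$, $f\in\subfring{}$: one writes $f=\sum_n c_n\symvari{n}$ and reads each $c_n\in\{0,\dots,p-1\}$ as an integer exponent. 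Moreover, conjugation by $\genf=\genp{1}$ turns a lower shear into an upper one, $\genf\,\gen{f}\,\genf^{-1}$ being the upper unitriangular matrix with off-diagonal entry $-f$, while $\genf$ itself realises the signed row swap. Thus the group generated by the listed elements contains all elementary matrices of both triangular types as well as the swap.

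With these tools, $\1t$ is reduced as follows. Because $\8{det}(\1t)=1$, any common divisor of the first-column entries $\1t_{11},\1t_{21}$ divides $1$ and is therefore a nonzero constant, so $\8{gcd}(\1t_{11},\1t_{21})$ is a unit of $\subfring{}$. Running the euclidean algorithm on this column---left multiplications $\gen{-q}$, which add $-q$ times the first row to the second, interlaced with the swap $\genf$ whenever the degrees force it---clears the $(2,1)$-entry and brings the first column to $(c,0)$ for some $c\in\field$, $c\ne0$. The matrix is now upper triangular, and since the row operations preserve the determinant its diagonal is $(c,c^{-1})$; a single upper shear (a $\genf$-conjugate of a lower shear) then removes the remaining $(1,2)$-entry. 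What is left is the diagonal matrix $\8{diag}(c,c^{-1})$, and I claim this is again a generator word: a direct computation gives $\8{diag}(c,c^{-1})=\genp{c}\,\genp{1}^{-1}$.

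Collecting the steps, there is a product $A$ of generators with $A\1t=\genp{c}\,\genp{1}^{-1}$, whence $\1t=A^{-1}\genp{c}\,\genp{1}^{-1}$ is a product of the listed generators, proving the theorem. (For $p=2$ the only unit is $c=1$, so the final diagonal step is trivial and only $\genf$ is needed among the Fourier transformations.) I expect the main obstacle to be the careful bookkeeping of the euclidean reduction inside the reflection invariant subring---verifying that each quotient $q$ is genuinely reflection invariant so that only admissible shears are used, and tracking the degrees (in $\vari+\vari^{-1}$, equivalently half the $\vari$-degree) so that the algorithm terminates---together with the correct treatment of the residual unit $c$ through the Fourier transformations; the remaining manipulations are routine linear algebra over a euclidean domain.
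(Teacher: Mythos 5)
Your proposal is correct, but it reaches the theorem by a genuinely different route than the paper. The paper's proof (Appendix~\ref{app-3}, via Lemma~\ref{lem-reduce}) performs column operations from the right, $(\xi',\eta')=(\xi,\eta)\gen{f}\genp{1}$, entirely in the original Laurent variable and never invokes generic polynomial division: the needed cancellations come from the symplectic constraint itself, since the vanishing of the top coefficient of $\fmspf{\xi}{\eta}=1$ forces the extreme coefficients of the longer column $\xi$ to be proportional to those of $\eta$, so a shear $\gen{\pal{f_1}{n_1}}$ strictly lowers the degree; the induction terminates in a constant symplectic matrix, whose decomposition into $\gen{\symvari{0}}$'s and $\genp{c}$'s is then asserted. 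You instead identify $\subfring{}\cong\field[\vari+\vari^{-1}]$ as a univariate polynomial ring over a field, hence a euclidean domain, and run the classical argument that $\SL{2}{R}$ over a euclidean domain is generated by elementary matrices: row operations by lower shears, upper shears obtained as $\genf\gen{f}\genf^{-1}$ (which is indeed unitriangular with entry $-f$), the signed swap $\genf$, with $\det(\1t)=1$ used only to conclude that the first-column gcd is a unit, and the residual diagonal killed by the identity $\8{diag}(c,c^{-1})=\genp{c}\genp{1}^{-1}$ --- all of which checks out (note $\genp{1}^{-1}=\genp{1}^{3}$, so your use of inverses is harmless in the generated group, and the generator set should be read with $c\neq 0$). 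What each approach buys: the paper's is fully constructive in the original coefficients --- the shear polynomials are read off directly from $\fmspf{\xi}{\eta}=1$, yielding an explicit reduction algorithm --- whereas yours is shorter and more conceptual, pinpoints exactly where one-dimensionality enters (the euclideanness of $\subfring{}$, which fails for $\subfring{s}$ with $s\ge 2$, consistent with the paper's remark that Lemma~\ref{lem-euclid} is univariate-only), and it supplies a detail the paper leaves implicit, namely the explicit factorization of the final constant matrix, including arbitrary units $c\in\field$, into the listed generators.
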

%%%

\begin{remark}
\rm A more concrete formulation of the statement of
Theorem~\ref{structure-thm-2} is that every one-dimensional centered
symplectic cellular automaton $\1t$ is a finite product of shear
transformations and local fourier transforms of the following form:
\begin{equation}
\label{decomp}
\1t=\gen{f_r}\genp{c_r}\cdots\gen{f_2}\genp{c_2}\gen{f_1}\genp{c_1}
\end{equation}
with reflection invariant polynomials $f_1,\dots, f_r\in\subfring{}$
and constants $c_1,\dots,c_r\in\field$. \fin
\end{remark}
%%%

%%%
\begin{example}
\rm Let us consider in the qubit case ($p=2$) the symplectic
cellular automaton
\begin{equation}\label{exa-sca}
\1t=\left(\begin{array}{cc}\symvari{1}&1\\1+\symvari{2}&\symvari{1}\end{array}\right)
\end{equation}
and, since $\symvari{2}=\symvari{1}^2$ for $p=2$, we have
$\det(\1t)=1$. The corresponding CQCA is given by
\begin{equation}
\begin{array}{lcrcl}
T(X_0)&=&Z_{-2}\otimes X_{-1}\otimes &Z_0&\otimes X_1\otimes Z_2\\
T(Z_0)&=&Z_{-1}\otimes &X_0&\otimes Z_1
\end{array} \; .
\end{equation}
The basic idea for deriving a decomposition like (\ref{decomp}) is
to reduce the support of the first column of $\1t$ by applying a
shear transformation from the right. We get
\begin{equation}
\1t\gen{\symvari{1}}=\left(\begin{array}{cc}0&1\\1&\symvari{1}\end{array}\right)\;.
\end{equation}
This matrix is obviously equal to $\gen{\symvari{1}}\genf$ and we
have
\begin{equation}
\1t=\gen{\symvari{1}}\genp{c}\gen{\symvari{1}}\;,
\end{equation}
which is indeed a decomposition in accordance with (\ref{decomp}).
\fin
\end{example}
%%%

%%%
\begin{remark}\rm
For $p=2$ all the generators $\gen{f}$ and $\genf$ are their own
inverses, so the time evolution of these operations alternates
between the identity and a single timestep of the automaton.
Especially these QCAs show no propagation, because the neighborhood
of the iterated automaton does not increase with the number of
timesteps. A nontrivial time evolution only occurs, if the
symplectic cellular automaton is composed of at least two different
generators. \fin\end{remark}
%%%

%%%
%%%
\section{Periodic boundary conditions}
%%%
%%%
\label{sec-periodic} In this chapter we are looking for
translationally invariant Clifford operations with periodic boundary
conditions on an $s$-dimensional lattice. These boundary conditions
are given by an $s$-dimensional torus $\7T_N^s$, which is determined
by $s$ independent lattice vectors $N=(N_1,\dots,N_s)$ (see
figure~\ref{torus}), and all lattice points which differ by these
vectors are identified. The number of (not identified) lattice
points is given by $|\,\7T_N^s|:=|\det(N_1,\dots,N_s)|$. We denote
here by $\fring{s,N}$ the ring of polynomials
$f=\sum_{x\in\7T_N^s}f(x) u^x$ such that the $u$-variables fulfill
the periodic boundary conditions $u^{N_1}=u^{N_2}=\dots =u^{N_s}=1$.
This guaranties that the product of two polynomials from
$\fring{s,N}$ is again an element from $\fring{s,N}$. But
algebraically there are large differences between $\fring{s}$ and
$\fring{s,N}$: $\fring{s,N}$ is not a division ring, because there
are zero divisors and there are in general other invertible elements
than $u^x=u_1^{x_1}u_2^{x_2}\cdots u_s^{x_s}$. The reflection
$\radj{f}$ is again given by replacing $\vari$ by $\vari^{-1}$ or in
other words we substitute $\vari^x$ by $\vari^{N-x}$. The symplectic
form $\FMspf$ is then of the same form as in the infinite lattice
case.

%%%
\begin{figure}[ht]

\centerline{\includegraphics[scale=0.4]{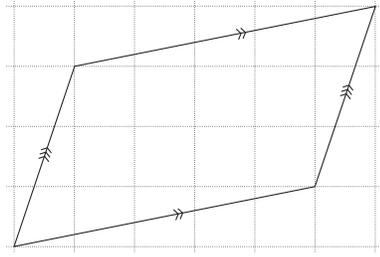}}

\caption{\label{torus} A $2$-dimensional torus defined by
$N_1=(1,3)$ and $N_2=(5,1)$.}
\end{figure}
%%%

Now we have to say, how a Clifford QCA (or a symplectic cellular
automaton) is defined on these systems. In the general theory of
QCAs~\cite{SchuWer04}, the neighborhood of a QCA with periodic
boundary conditions is not allowed to be too large in comparison
with the torus. This guaranties that the QCA can be extended to the
whole lattice. Since this case is covered by restricting the
existing Clifford QCAs to periodic boundary conditions, we drop all
locality conditions, and we take the same structure as in Corollary
\ref{cor-sca} as definition of a symplectic cellular automaton:

%%%
\begin{defi}\em
A $2\times 2$ matrix $\1t=(\1t_1,\1t_2)$ with entries in
$\fring{s,N}$ is a symplectic cellular automaton if the column
vectors fulfill $\fmspf{\1t_1}{\1t_1}=0=\fmspf{\1t_2}{\1t_2}$ and
$\fmspf{\1t_1}{\1t_2}=1$.
\end{defi}
%%%

With this definition it is possible to state an analogous version of
Theorem~\ref{structure-thm-1} also for periodic boundary conditions.
But the proof is quite different from the infinite lattice case.
%%%
\begin{thm}
For a phase space vector $\xi\in\fring{s,N}^2$ the following is
equivalent:
\begin{enumerate}
\item\label{equiv1-1}
There exists a uniquely determined state $\omega$ with
$\omega(\weyl{\tra{x}\xi})=1$ for all $x\in\7T_N^s$.
\item\label{equiv1-2}
$\fring{s,N}\xi$ is a maximally isotropic $\fring{s,N}$-subspace.
\item\label{equiv1-3}
There is a symplectic cellular automaton $\1t$ with
$\xi=\1t{0\choose 1}$.
\end{enumerate}
\end{thm}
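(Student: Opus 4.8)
The plan is to prove the cycle $(\ref{equiv1-2})\Rightarrow(\ref{equiv1-3})\Rightarrow(\ref{equiv1-1})\Rightarrow(\ref{equiv1-2})$, mirroring the scheme of Theorem~\ref{structure-thm-1} but exploiting the decisive new feature of the periodic setting: the torus carries only $M:=|\,\7T_N^s|$ cells, so the whole system is finite dimensional and the global phase space is the \emph{finite} symplectic space $(\field^{2M},\Spf)$. The translation to this picture is the one used in Proposition~\ref{sca-fourier}: the identity $\widehat{\mspf{\xi}{\eta}}=\fmspf{\xi}{\eta}$ remains valid on the finite lattice, so $\fmspf{\xi}{\eta}=0$ in $\fring{s,N}$ if and only if $\spf{\xi}{\tra{x}\eta}=0$ for all $x$. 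Consequently the $\fring{s,N}$-module $\fring{s,N}\xi$ is maximally isotropic precisely when its $\field$-span $V$, spanned by the $M$ translates $\{\tra{x}\xi\}$, is a Lagrangian subspace of $(\field^{2M},\Spf)$, i.e.\ $V=V^{\perp}$. The euclidean algorithm (Lemma~\ref{lem-euclid}) is unavailable here because $\fring{s,N}$ is not a principal ideal ring, so the construction of a complementary column must be replaced by a commutative-algebra argument; this is where the proof genuinely departs from the infinite-lattice case.

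$(\ref{equiv1-3})\Rightarrow(\ref{equiv1-1})$. This is transported exactly as before. As in Theorem~\ref{thm-1} and Proposition~\ref{sca-fourier}, a symplectic cellular automaton $\1t$ over $\fring{s,N}$ is a translation-invariant symplectic transformation of the finite phase space $(\field^{2M},\Spf)$, hence lifts to a translationally invariant Clifford automorphism $T$ of the (finite) Weyl algebra with $\weyl{\xi}=T(\weyl{0,1})$. The product state $\tilde\omega$ fixed by $\tilde\omega(\weyl{\tra{x}(0,1)})=1$ is the unique ``all spins up'' state, and $\omega:=\tilde\omega\circ T^{-1}$ is then the unique state with $\omega(\weyl{\tra{x}\xi})=1$: any competitor $\omega'$ satisfies $(\omega'\circ T)(\weyl{\tra{x}(0,1)})=1$, whence $\omega'\circ T=\tilde\omega$ and $\omega'=\omega$.

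$(\ref{equiv1-1})\Rightarrow(\ref{equiv1-2})$. I argue the contrapositive, and here finite dimensionality does the work. If $\fring{s,N}\xi$ is not isotropic, then $\spf{\xi}{\tra{z}\xi}\neq0$ for some $z$, so $\weyl{\xi}$ and $\weyl{\tra{z}\xi}$ commute only up to a nontrivial phase; a state assigning expectation $1$ to both would be stabilized by both, forcing that phase to be $1$, a contradiction, so no state exists. If $\fring{s,N}\xi$ is isotropic but not maximal, then $d:=\dim_\field V<M$ and, by the standard finite stabilizer formalism, the simultaneous $+1$ eigenspace of all $\weyl{\tra{x}\xi}$ has dimension $0$ or $p^{\,M-d}>1$; in neither case is there a unique stabilized state. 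Thus $(\ref{equiv1-1})$ fails.

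$(\ref{equiv1-2})\Rightarrow(\ref{equiv1-3})$ is the crux. Maximal isotropy gives $V=V^{\perp}$, hence $\dim_\field V=M$; since $V$ is the span of the $M$ translates of $\xi$, this forces $\operatorname{Ann}_{\fring{s,N}}(\xi)=0$. Now $\fring{s,N}=\field[G]$ is the group algebra of the finite abelian group $G=\7Z^s/N\7Z^s$, hence a finite (Artinian) commutative algebra and a product $\prod_i R_i$ of local rings. In each local factor the socle $\operatorname{Ann}(\6m_i)$ is nonzero and kills every non-unit, so $\operatorname{Ann}(\xi)=0$ forces at least one of $\xi_+,\xi_-$ to be a unit in each $R_i$; therefore $(\xi_+,\xi_-)=\fring{s,N}$. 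This replaces the B\'ezout identity supplied by Lemma~\ref{lem-euclid}: writing $1=c\xi_++d\xi_-$ and setting $\eta_0=(\radj{d},-\radj{c})$ gives $\fmspf{\eta_0}{\xi}=1$. Finally I adjust $\eta=\eta_0+f\xi$, which preserves $\fmspf{\eta}{\xi}=1$, and choose $f$ solving $\radj{f}-f=\fmspf{\eta_0}{\eta_0}$; the right-hand side is reflection-antisymmetric, and exactly as in the proof of Theorem~\ref{structure-thm-1} its coefficients at the self-paired lattice points vanish, so the equation is solvable over $\fring{s,N}$ and yields $\fmspf{\eta}{\eta}=0$. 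Then $\1t=(\eta,\xi)$ obeys the column conditions of the definition above, giving the required symplectic cellular automaton with $\xi=\1t{0\choose 1}$. I expect the step $\operatorname{Ann}(\xi)=0\Rightarrow(\xi_+,\xi_-)=\fring{s,N}$ to be the main obstacle, since it is precisely what the euclidean algorithm delivered for free in one dimension and must now be extracted from the Artinian structure of $\fring{s,N}$.
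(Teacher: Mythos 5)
Your proof is correct, and in the decisive implication \ref{equiv1-2}$\Rightarrow$\ref{equiv1-3} it takes a genuinely different route from the paper. The paper leaves the ring picture entirely: it views the translates $\tra{x}\hat\xi$ as a basis of a Lagrangian subspace of the finite symplectic $\field$-vector space and invokes the extension of a basis of a maximally isotropic subspace to a symplectic basis (citing \cite{McDSal98}) to obtain a single dual vector $\hat\eta$ with $\spf{\hat\eta}{\xi_x}=\delta_{x0}$, hence $\fmspf{\eta}{\xi}=1$; you instead stay inside $\fring{s,N}$ and manufacture the B\'ezout identity $c\xi_++d\xi_-=1$ from the Artinian structure ($\8{Ann}(\xi)=0$ because $\dim_\field\fring{s,N}\xi=M$, and the socle of each local factor kills all non-units). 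Your socle argument is sound and is exactly the torus substitute for Lemma~\ref{lem-euclid}; as a by-product it shows that maximal isotropy forces $(\xi_+,\xi_-)=\fring{s,N}$, the surviving analogue of condition 4 of Theorem~\ref{structure-thm-1} in a setting where reflection invariance may fail (consistent with the paper's final example). Both proofs then share the adjustment $\eta=\eta_0+f\xi$; one detail you assert but should actually verify is that $h=\fmspf{\eta_0}{\eta_0}$ has vanishing coefficient at \emph{every} $2$-torsion point $t$ of $\7T_N^s$, not only at the origin as in the 1D infinite-lattice proof --- for $p=2$ antisymmetry gives nothing at such points, but the coefficient $\sum_x\bigl(\eta_{0,+}(x)\eta_{0,-}(x+t)-\eta_{0,-}(x)\eta_{0,+}(x+t)\bigr)$ vanishes identically under the substitution $x\mapsto x+t$, so this gap closes immediately (the paper's own ``similar to the case of Theorem~\ref{structure-thm-1}'' glosses the same point). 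Finally, your cycle \ref{equiv1-3}$\Rightarrow$\ref{equiv1-1}$\Rightarrow$\ref{equiv1-2} differs from the paper's logical arrangement, which proves \ref{equiv1-1}$\Leftrightarrow$\ref{equiv1-2} directly by the stabilizer counting argument and \ref{equiv1-3}$\Rightarrow$\ref{equiv1-2} from invertibility of $\1t$; your version additionally needs the (standard) finite-dimensional Clifford lift of a symplectic transformation for \ref{equiv1-3}$\Rightarrow$\ref{equiv1-1}, and it inherits the same $p=2$ phase-convention caveat (e.g.\ $\weyl{\xi}$ of order $4$ versus $Z$ of order $2$) that is already implicit in the paper's own formulation of conditions \ref{equiv1-1} and \ref{equiv1-3}, so this is not a defect relative to the paper.
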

%%%

\begin{proof}
\ref{equiv1-1}. $\Longleftrightarrow$ \ref{equiv1-2}. A stabilizer
state on $M$ qudits is uniquely determined, if and only if, the
minimal number of generators of the stabilizer group equals $M$
\cite{NieChu00,Schl03}. Here we have the $|\,\7T_N^s|$ generators
$\weyl{\tra{x}\xi}$. These are independent, if and only if, they
generate a maximally abelian algebra, or equivalently, if
$\fring{s,N}\xi$ is a maximally isotropic subspace.

\ref{equiv1-2}. $\Longrightarrow$ \ref{equiv1-3}. Since we have a
finite dimensional space, there exists a symplectic basis, and any
basis of a maximally isotropic subspace can be extended to a
symplectic basis \cite{McDSal98}. For this construction we turn to
the original phase space (by inverse ``Fourier transform'') and
define by $\xi_x:=\tra{x}\hat{\xi}$ basis vectors of a subspace.
Since we know by~\ref{equiv1-2}. that this space is isotropic, these
vectors fulfill $\spf{\xi_x}{\xi_y}=0$ and therefore
$0=\mspf{\hat{\xi}}{\hat{\xi}}=\fmspf{\xi}{\xi}$. Then there exists
a dual vector $\hat{\eta}$ with
$\spf{\hat{\eta}}{\xi_x}=\delta_{x0}$ and we define
$\eta_y=\tra{y}\hat{\eta}$. We get that
$\spf{\eta_y}{\xi_x}=\spf{\hat{\eta}}{\xi_{x-y}}=\delta_{xy}$ holds
and so we have $\fmspf{\eta}{\xi}=1$. We have to verify that we can
choose $\eta$, such that $\fmspf{\eta}{\eta}=0$ holds. But if $\eta$
is a solution to $\fmspf{\eta}{\xi}=1$ the same is true for
$\eta'=\eta+f\xi$ for any $f\in\fring{s,N}$ and we can find similar
to the case of Theorem \ref{structure-thm-1} an appropriate $f$ with
$\fmspf{\eta'}{\eta'}=0$.

\ref{equiv1-3}. $\Longrightarrow$ \ref{equiv1-2}. Suppose that $\1t$
is a symplectic cellular automaton with $\xi=\1t{0\choose 1}$. Then
$\1t$ induces a homomorphism between the maximally isotropic
subspace $\fring{s,N}{0\choose 1}=0\oplus\fring{s,N}$ and
$\fring{s,N}\xi$ with $\1t f {0\choose 1}=\1t{0\choose f}=f\xi$.
Since $\1t$ is invertible and preserves the symplectic form
$\FMspf$, it follows that any maximally isotropic subspace is mapped
onto a maximally isotropic subspace, which implies that
$\fring{s,N}\xi$ is maximally isotropic.
\end{proof}
%%%

%%%
\begin{example}
\rm As an example of a translationally invariant stabilizer state,
we consider translationally invariant graph states \cite{Schl03}.
The graph is encoded by its adjacency matrix
$\Gamma=(\Gamma(x,y))_{x,x\in\7T_N^s}$, and the isotropic subspace
that determines the graph state is given by the phase space vectors
\begin{equation}
{\Gamma f\choose f}\in\glph{s}=\fffunct{s}^2
\end{equation}
with $f\in\fffunct{s}$. Translation invariance of the graph state
implies that the matrix elements $\Gamma(x,y)$ depend only on the
difference $x-y$, so there is a function $\gamma\in\fffunct{s}$ such
that $\Gamma(x,y)=\gamma(x-y)$ holds. Thus after Fourier transform
the phase space vector, which generates the maximally isotropic
subspace, is given by $\xi={\hat\gamma\choose 1}$ and we can choose
a suitable symplectic cellular automaton $\1t$ with
$\xi=\1t{0\choose 1}$ by
\begin{equation}
\1t=\left(
\begin{array}{cc}
1&\hat\gamma\\
0&1
\end{array}\right) \; .
\end{equation}

%%%
\begin{figure}[ht]
\centerline{\includegraphics[scale=0.6]{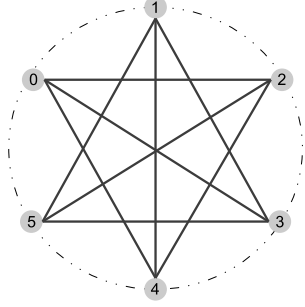}}

\caption{\label{FIG-1} A translationally invariant graph state on
the 1D torus $\7T^1_6=\7Z_6$.} \vspace{10pt} \hrule
\end{figure}
%%%

Figure~\ref{FIG-1} represents a translationally invariant graph
state on the 1D torus $\7T^1_6=\7Z_6$. The adjacency matrix $\Gamma$
is given by
\begin{equation}
\Gamma=\left(\begin{array}{cccccc}&&1&1&1&\\&&&1&1&1\\1&&&&1&1\\1&1&&&&1\\1&1&1&&&\\&1&1&1&&\end{array}\right)
\end{equation}
and obviously only depends on the difference $x-y$ of the variables
$x,y\in\7Z_6$. Applying the Fourier transform, yields the polynomial
$\hat\gamma=u^2+u^3+u^4=u^{-2}+u^2+u^3$. The symplectic cellular
automaton, which creates the graph state as explained above is given
by the matrix
\begin{equation}
\1t=\left(\begin{array}{cc}
1&u^{-2}+u^2+u^3\\
0&1
\end{array}\right) \; .
\end{equation}
Note that $\1t$ is reflection invariant, since $\radj{u^3}=u^3$ is a
reflection fix-point. \fin
\end{example}
%%%

We are going to present another example to show that the phase space
vectors do not have to be reflection invariant, because not all
invertible elements are monomials. But the invertibility of a fixed
polynomial depends on the size of the torus and so a fixed phase
space vector $\xi$ may define a translationally invariant stabilizer
state for some $N$, but it is possible that there exists $N'$, such
that $\fring{1,N'}\xi$ is not maximally isotropic, and therefore
$\xi$ does not characterize a unique stabilizer state for $N'$.

%%%
\begin{example}\rm
We consider for $p=2$ the phase space vector
\begin{equation}
\xi=(1+\vari+\vari^3){\vari^{-1}+\vari\choose 1}
\end{equation}
on a one dimensional torus of variable size. Note that
$\fmspf{\xi}{\xi}=0$ holds for all $N$, so $\xi$ generates an
isotropic subspace. The corresponding tensor product of Pauli
operators is given by
\begin{equation}
\weyl{\xi}=Z\otimes Y\otimes Y\otimes \I\otimes X\otimes Z\;,
\end{equation}
and is obviously not reflection invariant.

Let us first have a look at the case $N=7$. It is easy to show that
$1+\vari+\vari^3$ is not invertible. We define
$\tilde\xi=\xi/(1+\vari+\vari^3)={\vari^{-1}+\vari\choose 1}$ and
have that $\fmspf{\tilde\xi}{\xi}=0$, but
$\tilde\xi\notin\fring{1,7}\xi$. So $\fring{1,7}\xi$ is not
maximally isotropic, and there is no unique stabilizer state.

For $N=6$ the inverse of $1+\vari+\vari^3$ is given by
$\vari+\vari^4+\vari^5$, so $\xi$ and $\tilde\xi$ generate the same
subspace, which is actually maximally isotropic. So $\xi$ is indeed
a valid column of a symplectic automaton, but starting from the
``all spins up'' state both CQCAs corresponding to
\[
\1t=\left(\begin{array}{cc}\vari+\vari^4+\vari^5&(1+\vari+\vari^3)(\vari^{-1}+\vari)\\0&(1+\vari+\vari^3)\end{array}\right)\;,\quad\mbox{resp.}\quad\tilde{\1t}=\left(\begin{array}{cc}1&\vari^{-1}+\vari\\0&1\end{array}\right)
\]
prepare the same stabilizer state. \fin
\end{example}
%%%

%%%
%%%
\section{Conclusions}
%%%
%%%
We have analyzed the structure of Clifford quantum cellular automata
that act on a $s$-dimensional lattice of $p$-level systems. The
results which can be achieved depend on the dimension of the lattice
and whether we put periodic boundary conditions or working with the
infinite lattice.

We have characterized the group of CQCAs in terms of symplectic
cellular automata on a suitable phase space. With the help of
Fourier transform, this phase space can be identified with
two-dimensional vectors of Laurent-polynomials, and symplectic
cellular automata can be described by two-by-two matrices with
Laurent-polynomial entries. We have shown that these entries must be
reflection invariant and that up to some global shift the
determinant of the matrix must be one, so the group of CQCAs is
isomorphic to the special linear group of two-by-two matrices with
reflection invariant polynomials as matrix elements.

We have proven that there is a correspondence between 1D CQCAs and
1D translationally invariant stabilizer states. For a fixed
translationally invariant pure stabilizer state
$\omega^{\otimes\7Z}$, which is in particular a product state, every
other translationally invariant pure stabilizer state $\varphi$ can
be created by applying an appropriate CQCA $T_\varphi$ to the chosen
product state: $\varphi=\omega^{\otimes\7Z}\circ T_\varphi$.

Pure stabilizer states can be also characterized by maximally
isotropic subspaces. We have characterized the phase space vectors,
which generate maximally isotropic subspaces, namely their
components must be coprime and reflection invariant.

In the one-dimensional case we have also more clarified the group
structure of CQCAs. As we have shown, each one-dimensional CQCA can
be decomposed into a product of elementary shear automata and local
Fourier transforms, so the group of CQCAs is generated by this set
of operations.

For periodic boundary conditions the techniques from infinitely
extended lattices can be applied to a certain extend. According to
the discussion of translationally invariant stabilizer states on the
1D lattice, we have proven that there is an analogous correspondence
between CQCAs and translationally invariant stabilizer states with
periodic boundary conditions even in any lattice dimension.

%%%
\section*{Acknowledgments}
%%%
HV is supported by the ``DFG Forschergruppe 635''.

%%%
%%%
\begin{appendix}
%%%
%%%
\section{Proofs and technicalities}
%%%
%%%

%%%
\subsection{Ad Theorem~\ref{structure-thm-1}}
%%%

For the proof of Theorem~\ref{structure-thm-1} we need that a singly
generated isotropic subspace can always be embedded into a singly
generated maximally isotropic subspace:

%%%
\begin{lem}\label{iso-inbedding}
Let $0\ne\xi\in\fring{s}^2$ and $\fring{s}\xi$ be an isotropic, but
not maximally isotropic $\fring{s}$-subspace. Then there exists a
phase space vector $\eta\in\fring{s}^2$ such that
$\fring{s}\eta\supsetneqq\fring{s}\xi$ is maximally isotropic.
\end{lem}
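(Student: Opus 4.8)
The plan is to produce the enlarging vector explicitly by dividing out the greatest common divisor of the components of $\xi$. Concretely, set $f=\8{gcd}(\xi_+,\xi_-)$ and write $\xi=f\eta$ with $\eta\in\fring{s}^2$ satisfying $\8{gcd}(\eta_+,\eta_-)=1$; this is possible because $\fring{s}$ is a unique factorization domain, so gcd's exist and are well defined up to monomials. Since $\xi\neq0$ we have $\eta\neq0$ and $f\neq 0$. First I would record that $\fring{s}\eta$ is again isotropic: from $\fmspf{\xi}{\xi}=\radj f f\,\fmspf{\eta}{\eta}=0$ and the fact that $\fring{s}$ is an integral domain with $\radj f f\neq0$, one gets $\fmspf{\eta}{\eta}=0$.

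The heart of the argument is to show that $\fring{s}\eta$ is \emph{maximally} isotropic, and for this the crucial intermediate step is that $\eta$ is reflection invariant. Expanding $\fmspf{\eta}{\eta}=0$ gives $\radj{\eta}_+\eta_-=\radj{\eta}_-\eta_+$. Using coprimality of $\eta_+,\eta_-$ together with Euclid's lemma (valid in the UFD $\fring{s}$), $\eta_+$ divides $\radj{\eta}_+$; applying the reflection shows $\radj{\eta}_+$ divides $\eta_+$, so the two are associates, i.e. $\radj{\eta}_+=c\,\eta_+$ for a monomial $c$, and then the displayed identity forces $\radj{\eta}_-=c\,\eta_-$ as well. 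Hence $\radj\eta=c\,\eta$, i.e. $\eta$ is reflection invariant to some $a\in\frac12\7Z^s$ (the easy degenerate cases $\eta_+=0$ or $\eta_-=0$ being handled directly). Maximality then follows: if $\fmspf{\eta}{\zeta}=0$, then $\radj{\eta}_+\zeta_-=\radj{\eta}_-\zeta_+$, and since $\8{gcd}(\radj{\eta}_+,\radj{\eta}_-)=1$ the same divisibility reasoning yields $\zeta=g\,\radj\eta=gc\,\eta\in\fring{s}\eta$ for some $g\in\fring{s}$.

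It remains to check that the inclusion $\fring{s}\xi\subseteq\fring{s}\eta$ is strict, i.e. that $f$ is not a monomial. If $f$ were invertible, then $\xi$ itself would have coprime components and be isotropic, so the argument of the previous paragraph applied to $\xi$ would make $\fring{s}\xi$ maximally isotropic, contradicting the hypothesis. Therefore $f$ is a non-unit, and $\fring{s}\xi=f\,\fring{s}\eta\subsetneqq\fring{s}\eta$ (strictness again uses that $\fring{s}$ is a domain: $\eta\in f\fring{s}\eta$ would force $f$ to be a unit). Thus $\eta$ is the required phase space vector.

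I expect the main obstacle to be the maximal-isotropy step, and in particular extracting reflection invariance of $\eta$ from coprimality and isotropy alone. The subtlety worth emphasizing is that for $s>1$ the ring $\fring{s}$ is a UFD but \emph{not} a principal ideal ring, so the extended Euclidean identity of Lemma~\ref{lem-euclid} is unavailable; the whole argument must be phrased purely in terms of gcd's, divisibility, and Euclid's lemma, all of which survive in a UFD.
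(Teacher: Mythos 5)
Your proof is correct, and it is in fact more complete than the paper's own argument. The paper proceeds by a three-way case distinction (one component zero; one component a reflection-invariant multiple of the other; the generic case), and in the generic case it merely asserts that, after passing to a vector with coprime components, one ``can choose $\8{gcd}(\eta_+,\eta_-)=1$ and $\fring{s}\eta$ to be maximally isotropic''; the implication you isolate and actually prove --- that a coprime isotropic generator spans a \emph{maximally} isotropic subspace --- is nowhere established there (Lemma~\ref{lem-iso-1} gives only the converse directions), and your Euclid's-lemma argument in the UFD $\fring{s}$ supplies exactly this missing step, uniformly in all cases. Your route also sidesteps an inaccurate intermediate claim in the paper's generic case: isotropy of $\xi$ does not force $\xi=f\radj{\xi}$ with $f$ invertible, since $\xi=d\eta$ is isotropic for \emph{every} $d\in\fring{s}$ as soon as $\eta$ is (take $p=2$, $s=1$, $d=1+\vari+\vari^{3}$, $\eta={\vari+\vari^{-1}\choose 1+\vari+\vari^{-1}}$); only the coprime part $\eta$ is forced to be reflection invariant, which is precisely what you show. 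Your remark that for $s>1$ one must argue with gcd's and Euclid's lemma in a UFD rather than with Lemma~\ref{lem-euclid} is likewise on target, since $\fring{s}$ is not a principal ideal ring for $s>1$. One gloss worth flagging: associates in $\fring{s}$ differ by a unit $\lambda\vari^{m}$ with $\lambda\in\field^{\times}$, and applying the reflection twice to $\radj{\eta}=c\eta$ only yields $\lambda^{2}=1$, so for odd $p$ you get a priori $c=\pm\vari^{-2a}$ rather than a plain monomial (the paper's Lemma~\ref{lem-iso-1} glosses the same point); this is harmless for your argument, because your maximality step uses only $\zeta=g\radj{\eta}\in\fring{s}\eta$, for which any unit $c$ suffices.
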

%%%
\begin{proof}
$\fring{s}\xi$ is isotropic if and only if the equation
$0=\fmspf{\xi}{\xi}=\radj{\xi}_+\xi_--\radj{\xi}_-\xi_+$ holds. We
make a distinction of cases for this equation:
\begin{enumerate}
\item[i.]
$\xi_+=0$ (analogously $\xi_-=0$): Then
$\fring{s}\xi=\{0\}\oplus\fring{s}\xi_-$ and $\xi_-$ is not
invertible since this would force
$\fring{s}\xi=\{0\}\oplus\fring{s}$ to be maximally isotropic. So we
can set $\eta=(0,1)$.
\item[ii.]
$\xi_+=f\xi_-$ (analogously $\xi_-=f\xi_+$) with $f$ reflection
invariant: Then we have $\fring{s}\xi=\fring{s}\xi_-(f,1)$. We set
$\eta=(f,1)$ and get that $\fring{s}\eta$ is a maximally isotropic
subspace since $0=\fmspf{\eta}{\lambda}=f\lambda_--\lambda_+$
implies $\lambda=\lambda_-\eta\in\fring{s}\eta$.
\item[iii.]
$\xi_+\ne 0\ne\xi_-$ and $\xi_\pm\ne f\xi_\mp$: Then
$\xi=f\radj{\xi}$ with $f$ invertible, so $\xi$ is reflection
invariant for some $n\in\frac{1}{2}\7Z^d$. Because $\fring{s}\xi$ is
not maximally isotropic we can find $\eta\notin\fring{s}\xi$ with
$0=\fmspf{\xi}{\eta}=u^n(\xi_+\eta_--\xi_-\eta_+)$. Since $\xi_+$
and $\xi_-$ are nonvanishing this implies $\xi=g\eta$ for some
$g\in\fring{s}$. We can choose $\8{gcd}(\eta_+,\eta_-)=1$ and
$\fring{s}\eta$ to be maximally isotropic.
\end{enumerate}
\end{proof}
%%%

%%%
%%%
\subsection{Ad Theorem~\ref{structure-thm-2}}
%%%
\label{app-3}
%%%
For a polynomial $f\in\fring{}$ the coefficient of the monomial
$\vari^x$ is $\coef{x}{f}$. Recall that ``degree''of a polynomial in
$f\in\fring{}$ is defined by
$\deg{f}:=\max\{x|\coef{x}{f}\not=0\}-\min\{x|\coef{x}{f}\not=0\}$
and that the support is defined by
$\supp{f}:=\{x|\coef{x}{f}\not=0\}$.

%%%
\begin{lem}\label{lem-reduce}
Let $(\xi,\eta)$ be a symplectic cellular automaton which is
invariant under the reflection at the origin: $\xi=\radj{\xi}$ and
$\eta=\radj{\eta}$. If the degrees of column vectors fulfill
$\deg{\xi}>\deg{\eta}$ then there exists a shear transformation
$\gen{f}$, with reflection invariant $f\in\fring{}$, such that the
symplectic transformation
\begin{equation}
(\xi',\eta')=(\xi,\eta)\gen{f}\genp{1}
\end{equation}
satisfies $\deg{\xi,\eta}>\deg{\xi',\eta'}$ and $\deg{\xi'}>\deg{\eta'}$.
\end{lem}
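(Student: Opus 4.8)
The plan is to turn the statement into a single polynomial-division step performed inside the ring $\subfring{}$ of reflection-invariant Laurent polynomials. First I would multiply out the right-hand side. Since $\gen{f}\genf=\left(\begin{array}{cc}0&-1\\1&-f\end{array}\right)$, one finds $(\xi,\eta)\gen{f}\genf=(\eta,\,-(\xi+f\eta))$, so $\xi'=\eta$ and $\eta'=-(\xi+f\eta)$. Thus both asserted inequalities reduce to a single requirement: produce a reflection-invariant $f\in\subfring{}$ with $\deg{\xi+f\eta}<\deg{\eta}$. Indeed, then $\deg{\xi'}=\deg{\eta}>\deg{\xi+f\eta}=\deg{\eta'}$, and since $\deg{\xi+f\eta}<\deg{\eta}<\deg{\xi}$ the largest column degree strictly drops, i.e. $\deg{\xi,\eta}=\deg{\xi}>\deg{\eta}=\deg{\xi',\eta'}$. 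So everything comes down to dividing the column $\xi$ by the column $\eta$ with a \emph{symmetric} multiplier.

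For that division I would argue with top coefficients. As $\xi$ and $\eta$ are reflection invariant, each component $\xi_\pm,\eta_\pm$ is a palindromic Laurent polynomial, so $\deg{\xi}=2A$ and $\deg{\eta}=2B$ are even; let $\mathrm{lc}(\xi)\in\field^2$ collect the coefficients of $\vari^{A}$ in $\xi_+$ and in $\xi_-$ (a component of smaller span contributing $0$), and define $\mathrm{lc}(\eta)$ at $\vari^{B}$ likewise. The crucial input is the determinant identity. Because $\xi=\radj{\xi}$, the symplectic condition of Corollary~\ref{cor-sca} reads $\fmspf{\xi}{\eta}=\xi_+\eta_--\xi_-\eta_+=1$. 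Reading off the coefficient of the top power $\vari^{A+B}$ on both sides, and noting $A+B>0$ since $A>B\ge0$, forces $\det(\mathrm{lc}(\xi),\mathrm{lc}(\eta))=0$; that is, the leading-coefficient vectors are parallel. This is precisely what allows one symmetric multiplier to annihilate the top terms of both components simultaneously, and it is the heart of the lemma.

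Given parallelism, write $\mathrm{lc}(\xi)=c\,\mathrm{lc}(\eta)$ with $c\in\field$. The polynomial $\symvari{A-B}$ is reflection invariant, and $\symvari{A-B}\eta$ has top power $\vari^{A}$ with leading-coefficient vector $\mathrm{lc}(\eta)$ (the tail $\vari^{-(A-B)}$, present when $A>B$, contributes only powers below $\vari^{A}$). Hence $\xi-c\,\symvari{A-B}\eta$ has vanishing $\vari^{\pm A}$ coefficients in both components and strictly smaller degree. Passing to this vector is the column operation $(\xi,\eta)\mapsto(\xi,\eta)\gen{-c\symvari{A-B}}$, which preserves $\det=1$ and keeps both columns reflection invariant, so the same reasoning applies to the new first column. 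This is just the division algorithm in $\subfring{}\cong\field[\vari+\vari^{-1}]$: iterating, the half-span of the first column strictly decreases and after finitely many steps falls below $B$, while the accumulated multiplier $f$ (a finite $\field$-combination of the $\symvari{n}$) is reflection invariant. This gives $\deg{\xi+f\eta}<\deg{\eta}$ and finishes the proof. I expect the only genuine obstacle to be the simultaneous cancellation in the two components; once the parallelism of leading coefficients forced by $\det(\xi,\eta)=1$ is in hand, the remaining bookkeeping (evenness of the degrees and termination) is routine.
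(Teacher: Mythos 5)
Your proposal is correct and follows essentially the same route as the paper's proof: you read off the coefficient of the top power $\vari^{A+B}$ in $\fmspf{\xi}{\eta}=1$ to get parallel leading-coefficient vectors (the paper's computation at $\vari^{x+y}$ yielding $\coef{x}{\xi}=-f_1\coef{y}{\eta}$), cancel both $\vari^{\pm A}$ terms at once with a symmetric shear $c\,\symvari{A-B}$ (the paper's $\gen{\pal{f_1}{n_1}}$), iterate, and finish with the swap $\genp{1}$, your $n=0$ step playing the role of the paper's final constant shear $f_{l+1}$. The one blemish — the degenerate case $\deg{\eta}=0$, where the reduction bottoms out at a constant matrix and the strict inequality $\deg{\xi'}>\deg{\eta'}$ cannot be achieved — is present in exactly the same form in the paper's own proof, so it is not a gap relative to the published argument.
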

%%%
\begin{proof}
Since $\xi$ and $\eta$ are reflection invariant, the degree is an
even integer and we introduce $x:=\deg{\xi}/2$, $y:=\deg{\eta}/2$,
as well as $n_1:=x-y>0$. We conclude from the identity
$\fmspf{\xi}{\eta}=1$ that
\begin{equation}
\coef{x+y}{\fmspf{\xi}{\eta}} =\coef{-x}{\xi_+}\coef{y}{\eta_-}-\coef{-x}{\xi_-}\coef{y}{\eta_+}=0
\end{equation}
is valid. This implies that
\begin{equation}
\coef{x}{\xi}=\coef{-x}{\xi}=-f_1\coef{y}{\eta}=-f_1\coef{-y}{\eta}
\end{equation}
for some $f_1\in\7F$. Now $\coef{\pm x}{\xi+\pal{f_1}{n_1}\eta}=0$ which implies that
\begin{equation}
\deg{\xi+\pal{f_1}{n_1}\eta}<\deg{\xi} \; .
\end{equation}
Now we observe
\begin{eqnarray}
(\xi_1,\eta_1)&:=&(\xi,\eta)\gen{\pal{f_1}{n_1}}
\nonumber\\
&=&
\left(\begin{array}{cc}\xi_+&\eta_+\\
\xi_-&\eta_-\end{array}\right)\shear{\pal{f_1}{n_1}}
\\
\nonumber
&=&
\left(\begin{array}{cc}\xi_++\pal{f_1}{n_1}\eta_+&\eta_+\\
\xi_-+\pal{f_1}{n_1}\eta_-&\eta_-\end{array}\right)
\end{eqnarray}
from which we conclude that $\deg{\xi_1,\eta_1}<\deg{\xi,\eta}$. If
$\deg{\xi_1}>\deg{\eta_1}$ we can find a shear transformation
$\gen{\pal{f_2}{n_2}}$ such that
\begin{equation}
(\xi_2,\eta_2)=(\xi_1,\eta_1)\gen{\pal{f_2}{n_2}}
\end{equation}
fulfills $\deg{\xi_2,\eta_2}<\deg{\xi_1,\eta_1}$. We can proceed
this reduction until the $l$th step with
$2l=|\supp{\xi}\setminus\supp{\eta}|$. The resulting symplectic
cellular automaton
\begin{equation}
(\xi_l,\eta_l)=(\xi_{l-1},\eta_{l-1})\gen{\pal{f_l}{n_l}}
\end{equation}
then satisfies $\deg{\xi_l}\leq\deg{\eta_l}$. If
$\deg{\xi_l}=\deg{\eta_l}$ then, there is an appropriate constant
$f_{l+1}\in\7F$ such that
\begin{equation}
(\xi',\eta')=(\xi,\eta)\gen{f}\genp{1}=(-\eta_{l},\xi_{l}+f_{l+1}\eta_l)
\end{equation}
holds with $\deg{\xi'}>\deg{\eta'}$. Here $f$ is the reflection
invariant polynomial
\begin{equation}
f=\sum_{j=1}^{l+1}\pal{f_j}{n_j} \; .
\end{equation}
If $\deg{\xi_l}<\deg{\eta_l}$, then the shear transformation
$\gen{f_{l+1}}$ is not applied and we get
\begin{equation}
(\xi',\eta')=(\xi,\eta)\gen{f}\genp{1}=(-\eta_{l},\xi_{l})
\end{equation}
with the polynomial $f=\sum_{j=1}^{l}\pal{f_j}{n_j}$.
\end{proof}
%%%

%%%
\begin{proof}[Proof of Theorem~\ref{structure-thm-2}.]
Let $(\xi_0,\eta_0)$ be a symplectic cellular automaton which is
invariant under the reflection at the origin. Then, by
Lemma~\ref{lem-reduce}, there exists a symplectic cellular automaton
$(\xi_1,\eta_1)$ and a shear transformation $\gen{f_1}$ such that
\begin{equation}
(\xi_0,\eta_0)=(\xi_1,\eta_1)\gen{f_1}\genp{1}
\end{equation}
and $\deg{\xi_1}>\deg{\eta_1}$. Thus we can iterate this reduction
process until $(\xi_k,\eta_k)$ is a constant symplectic
transformation (corresponding to a QCA with single cell
neighborhood), which can be decomposed into a product of local shear
transformations $\gen{\symvari{0}}$ and local Fourier transforms
$\genp{c_i}$ with $c_i\in\field$. This yields the following
decomposition of $(\xi_0,\eta_0)$:
\begin{equation}\label{decomp-app}
(\xi_0,\eta_0)=\gen{f_r}\genp{c_r}\cdots
\gen{f_2}\genp{c_2}\gen{f_1}\genp{c_1} \; .
\end{equation}
\end{proof}
%%%
%%%
\end{appendix}
%%%
%%%
%%%

%%%
%%%
%%%
\end{document}